\newcommand{\frachalf}{\frac{1}{2}}
\newcommand{\fracthird}{\frac{1}{3}}
\newcommand{\fractwothird}{\frac{2}{3}}
\theoremstyle{plain}
\newtheorem{thm}{Theorem}
\newtheorem{lemma}[thm]{Lemma}
\newcommand{\be}{\begin{equation}}
\newcommand{\ee}{\end{equation}}
\title{Coupled fast and slow feedbacks lead to continual evolution: A general modeling approach}
\author{Meike T. Wortel}
\author{Han Peters}
\author{Nils Chr. Stenseth}
\address{Meike Wortel and Nils Chr. Stenseth:
Centre for Ecological and Evolutionary Synthesis (CEES),
 Department of Biosciences, University of Oslo, Oslo, Norway
 }
\address{Han Peters:
Korteweg de Vries Institute for Mathematics\\
 University of Amsterdam, Amsterdam, The Netherlands}
\address{Present address Meike Wortel:
Institite for Biodiversity and Ecosystem Dynamics\\
University of Amsterdam, Amsterdam, The Netherlands}
\address{meike.t.wortel@gmail.com}
\date{}
\begin{document}

\begin{abstract}
The Red Queen Hypothesis, which suggests that continual evolution can result from solely biotic interactions, has been studied in macroevolutionary and microevolutionary contexts. While the latter has been effective in describing examples in which evolution does not cease, describing which properties lead to continual evolution or to stasis remains a major challenge. In many contexts it is unclear which assumptions are necessary for continual evolution, and whether described behavior is robust under perturbations. Our aim here is to prove continual evolution under minimal conditions and in a general framework, thus automatically obtaining robustness. %We aim to increase our understanding of the properties that lead to continual evolution using a general model, guaranteeing robustness, and proving minimal conditions for continual evolution.
We show that the combination of a fast positive and a slow negative feedback causes continual evolution with a single evolving trait, provided the ecological timescale is sufficiently separated from the timescales of mutations and negative feedback. Our approach and results form a next step towards a deeper understanding of the evolutionary dynamics resulting from biotic interactions.
\end{abstract}

\maketitle
%\tableofcontents

\section{Introduction}

Species typically live in complex ecosystems with many interactions among them and external drivers. The evolutionary dynamics of a species in a complex ecosystem can be caused by the properties of the species, by the interaction with the coexisting species or by external drivers. To better understand to what extent the emerging ecological and evolutionary dynamics is caused by intrinsic properties of the species or  biotic interactions (within- and between species interaction) we need to ignore abiotic drivers. Such an abiotically unchanging environment may lead to a static adaptive landscape, where the adaptation will follow a path towards a peak in that landscape---reachable or not \cite{wiser2013long}. Since the major part of any individual's environment typically is composed of other (evolving) species, any species' environment will in general change even without external abiotic variation, both through ecological and evolutionary changes. Hence, the adaptive landscape will be dynamic, which can lead to continual co-evolutionary dynamics, where traits over evolutionary time are fluctuating, escalating or chasing each other \cite{brockhurst2014running}.

Of particular interest to us is the emergence of the so-called Red Queen Dynamics \cite{van1973new}, a concept that has had a major influence on micro- and macroevolutionary theory. \cite{stenseth1984coevolution} and others \cite{vermeij2013reining,liow2011red,voje2015role} have analyzed a macro-evolutionary model aiming at understanding under what conditions continual Red Queen dynamics and stasis results from within-system biotic interactions: Stenseth and Maynard Smith demonstrated that both could result depending upon the nature of the within system biotic interactions --- without being able translated into ecological terms what these conditions were. \cite{nordbotten2016asymmetric} found that symmetric competitive interactions are more likely to lead to stasis. Another approach is studying interactions of a few species in more mechanistic detail, microevolutionary Red Queen (RQ) dynamics \cite{brockhurst2014running}.

Theoretical studies of micro-evolutionary RQ dynamics mostly use methods based on adaptive dynamics and quantitative genetics \cite{dieckmann1995evolutionary,mougi2010evolution,marrow1992coevolution,marrow1993evolutionary,dieckmann1996dynamical,branco2018eco}. The adaptive dynamics approach assumes the ecological dynamics have reached an equilibrium and studies the invasion of individuals with a slightly deviating phenotype of the adaptive trait. The advantage of this approach is that it allows for a rigid theoretical analysis of the system. The quantitative genetics approach does not assume the ecological dynamics to be in equilibrium, but has a timescale separation between the evolutionary adaptation and the ecological dynamics. If the evolutionary rate of change is very slow, the quantitative genetics approach becomes similar to the adaptive dynamics approach. Both methods assume that adaptive traits evolve along a fitness gradient. Studies focussing on predator-prey and host-parasite systems have been able to reach conclusions about conditions that increase or decrease the chance of RQ dynamics in a specific setting (e.g. fast adaptation is less likely to lead to RQ dynamics \cite{mougi2010evolution} and RQ dynamics requires an intermediate harvesting efficiency of the prey \cite{dieckmann1995evolutionary}).

Most of the above mentioned studies use specific functional forms for their analysis, hampering the generalisation of the obtained results. Meta-analysis (such as the one by \cite{abrams2000evolution}) can provide some more general insights, but the conclusions are still limited, especially since many studies use similar equations. To obtain general results and therefore a broad understanding of what ecological interactions can cause certain evolutionary patterns, such as continual evolution, we need as general models as possible. With this contribution we aim at extending the theoretical understanding of under which conditions continual evolution and stasis will result. For this purpose we use a very general model with few assumptions regarding the form of the model functions. With such a general approach (making a minimum of assumption on the functional forms and the parameters) we increase the robustness of the obtained results. We find that a system with slow and fast feedback interactions exhibits continual Red Queen type of dynamics depending on the timescales. Moreover we allow for a polymorphic population, not constraining the distribution of phenotypes that may be present in the population, and mutations of small and large effects.

\section{Model description and the emerging eco-evolutionary dynamics}

\subsection{Model description}
In order to focus our argument, we use a general model of an evolving trait for a single species, which can be extended to multiple species:

\begin{equation}
\begin{aligned}
\frac{du}{dt} & = u \cdot f(u, \varphi) + \epsilon_m \cdot g(u)\\
\frac{d\varphi}{dt} & = \epsilon_e \cdot h(u, \varphi).
\end{aligned}
\end{equation}

Here $u$ represents the population density distribution over the trait space. In the general case the evolving trait is not specified and therefore its trait space could be anything. The external factors (abiotic factors, species without evolving traits) are captured in $\varphi$. The function $f(u, \varphi)$ describes the growth of the population (which can depend on the distribution in trait space $u$ as well as the external factors $\varphi$). The function $g(u)$ describes the change in the trait due to mutations. The function $g$ does not incorporate how well these phenotypes perform (the change due to mutations is modeled in an unbiased manor), the change of the trait to more fit individuals comes from the modelling of the growth of all phenotypes. The function $h(u, \varphi)$ describes the change in external factors, which can depend on the phenotype densities $u$. Time scale differences between growth, external factors and mutations are captured in $\epsilon_e$ and $\epsilon_m$.

We include a possible polymorphic trait distribution (as is preferable argued by \cite{law1997evolution} and also used in \cite{van1995predator}). A polymorphic trait distribution arises easily with asexual reproduction or traits that are determined by a few loci, but can also arise when assortive mating develops (see the discussion in \cite{kisdi2002red}). We use an approach including genetic variation by modeling the populations with differential equations, which also releases the assumption that mutations have to cause an infinitesimal change in the trait value.

An example of a slow negative feedback is a predator prey interaction, where the generation time of the predator is much slower than that of the prey and the predator predates preferable on prey with a specific phenotype. An example of the specificity of the predator is predators preferring a certain size of prey (e.g. bears preferring large salmon \cite{carlson2011eco} or zooplankton preferring phytoplankton with a certain nitrogen:phosphorus ratio \cite{branco2018eco}). Other examples besides predator-prey interactions include: a herbivore that changes the vegetation which then becomes less suitable for the herbivore; micro-organisms that produce a compound that inhibits them but is diluted in a large volume (to achieve the feedback being slow); a species with a complex life cycle with a habitat shift (with an evolvable component), if a crowded habitat has a delayed effect on the habitat quality; or a more complicated response of the whole ecosystem that leads to a negative feedback (e.g. with intransitive cycles as in \cite{bonachela2017eco}). The last example includes cases where the feedback is caused by human intervention, such as flu vaccinations---common viruses will be vaccinated against during the next season.

The fast positive feedback is an Allee effect on the phenotype. This could be due to for example finding mates (if the trait has effect on suitability of mates), cooperation in defense or in feeding. An example is when the adaptive trait is foraging on a certain vegetation type, the positive feedback could be that the more individuals forage in the same place, the better they are protected against predation. Or in the case of a habitat shift in the life cycle, similar habitat shift will share the second habitat with more individuals, providing mates or protection.

\subsection{A single species with two phenotypes shows continual evolutionary dynamics when the negative feedback is sufficiently slow}

The best way to understand the system is to reduce it to the simplest version. Therefore we start with an simple version of a trait with only two possible phenotypes, one that is interacting with a external feedback and one that is not (or less strongly) (see Figure \ref{fig:binaryTheory}A). Although the main objective of this simplification is to understand the mechanisms that lead to continual evolution, this representation also reflects a biological system where a trait is either present or not at all, e.g. choosing one host or another. When the trait also affects mate choice (e.g. because they will be located near the same host), the trait has a positive feedback. The slow negative feedback results from the fact that one of the hosts can develop defenses.

We analyse the system where both phenotypes and the feedback change over time. We assume that the phenotypes have a positive feedback effect, i.e. the more individuals with a certain phenotype, the higher the growth rate of that phenotype. To be precise, we study the following system with the interacting phenotype $A$ (the phenotype choosing the host that can develop defenses), the (less) interacting phenotype $B$ (the phenotype choosing the host that cannot develop defenses) and the feedback compound $\varphi$ (the amount of defense in the host population that can develop defenses in the example):

\begin{equation}
\begin{aligned}
\frac{dA}{dt} & = A \cdot f_A(A,B, \varphi) + \epsilon_m \cdot g(A,B)\\
\frac{dB}{dt} & = B \cdot f_B(A,B, \varphi) - \epsilon_m \cdot g(A,B)\\
\frac{d\varphi}{dt} & = \epsilon_e \cdot h(A, B, \varphi).
\end{aligned} \label{eq:binary}
\end{equation}

The growth functions of the phenotypes $f_A$ and $f_B$ include the positive feedback. $\varphi$ is increased more by $A$ than by $B$ which is incorporated in the function $h$. Mutations are possible from one phenotype to the other with the mutation function $g$. $\epsilon_m$ and $\epsilon_e$ denote the different timescales of the population dynamics (no $\epsilon$), the feedback dynamics ($\epsilon_e$) and the evolutionary dynamics ($\epsilon_m$). We have modeled a positive effect of $A$ on $\varphi$ and a negative effect of $\varphi$ on $A$, but all our results are also valid for the opposite, then we can substitute $\varphi$ by $-\varphi$.

We prove that, for the system in Figure \ref{fig:binaryTheory}A and Equations \eqref{eq:binary} and with only slight restrictions on the functions $f$, $g$ and $h$, if we can choose $\epsilon_e$ and $\epsilon_m$ sufficiently small, the system will exhibit continual evolution for almost all initial states. The detailed proof is in the Supplementary Information section \ref{sec:twophenofixedpopsize} and \ref{sec:twophenovariablepopsize}; the reasoning is as follows: First, we prove the case of a constant population size, where we can describe our system by the feedback compound $\varphi$ and the ratio $(R)$ between the two phenotypes. Observing that the proof for this case does not rely upon knowing the exact value of $\frac{dR}{dt}$ but merely the \emph{sign}, we can extend the proof to systems where the population size is not fixed, by adopting an assumption controlling the sign of $\frac{dR}{dt}$. There are several possibilities but we used the 'unique stable value assumption': when we keep $\varphi$ fixed and know $A$ there is only one value of $B$ that guarantees a stable state, and likewise when we know $B$ there is only one value of $A$. The proof only requires knowledge of the sign in a subregion of the domain. % on a We only need this assumption to be satisfied locally.

\begin{figure}[!htbp]
  \centering
  \includegraphics[width=.95 \textwidth]{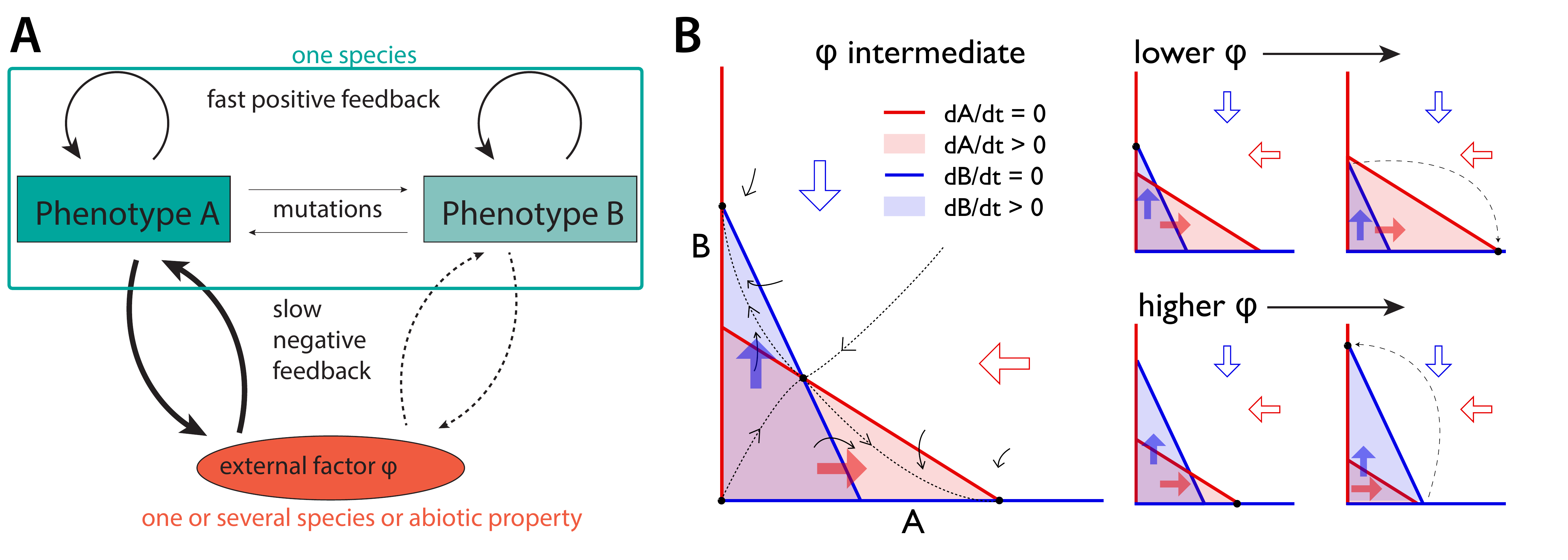}
  \caption{\textbf{Continual evolution with two phenotypes.} \textbf{A} A species has phenotypes $A$ and $B$, which both have fast positive feedbacks on their own phenotype. Phenotype $A$ has a strong negative feedback with an external factor $\varphi$ (e.g. an external compound or other species such as a predator), while phenotype $B$ has no, or a weaker, negative feedback. Mutations are possible between the two phenotypes, although rates are low. \textbf{B} The assumptions guarantee that the phase plane for intermediate $\varphi$ resemble the one shown on the left. The isoclines for $dA/dt=0$ and $dB/dt=0$ cross as shown because the fast positive feedback ensures that the growth of phenotype $B$ decreases faster with increasing $A$ than the growth of phenotype $A$ and vice versa. The system will either tend to the intersection of the $dA/dt$ and $dB/dt$ isoclines on the $A$ or the $B$ axis. Near the $B$ axis $\varphi$ will decrease and as shown on the top right the phase plane diagram will change such that the intersection will become unstable ($dA/dt$ becomes positive) and the system will go to the stable equilibrium at the intersection at the $A$ axis. Similarly, close to the $A$ axis $\varphi$ will increase, and the intersection near the $A$ axis will become unstable. These properties will lead to continual evolution.}
  \label{fig:binaryTheory}
\end{figure}

The idea of the proof for a variable population size is illustrated in Figure \ref{fig:binarySystem}B. The diagrams show an approximation of the dynamics in the ($A$,$B$) phase planes. This approximation is without the mutation term, but since mutations are rare (in our case the mutation term is small), the actual phase plane is very similar to the one shown. When at some point the feedback level is intermediate, there are 4 intersections of the lines $\frac{dA}{dt}=0$ and $\frac{dB}{dt}=0$ of which one is repelling (the origin), one is a saddle point (the point in the middle) and the other two are attracting (the nature of the equilibria can be seen from the sign of the derivatives). Depending on the initial densities of $A$ and $B$ the system goes to an attracting point near the $A$-axis or the $B$-axis. When this is near the $A$-axis, $B$ will be almost 0 and since $A$ is interacting strongly with the feedback compound, $\varphi$ will increase. This will in turn change the phase plane diagram and the intersection near the $A$-axis will become repelling. Then there is only one attracting intersection and that is near the $B$-axis and the system will approach that state. Since the dynamics in $\varphi$ are slower than in $A$ and $B$, the system will come close to the intersection near the $B$-axis before $\varphi$ changes significantly. When $\varphi$ changes we are back to the initial figure for intermediate $\varphi$, but now $\varphi$ will continue to decrease and the intersection near the $B$-axis will disappear. This will continue indefinitely and therefore will lead to continual evolution.

\begin{figure}[!htbp]
  \includegraphics[width=.5 \textwidth]{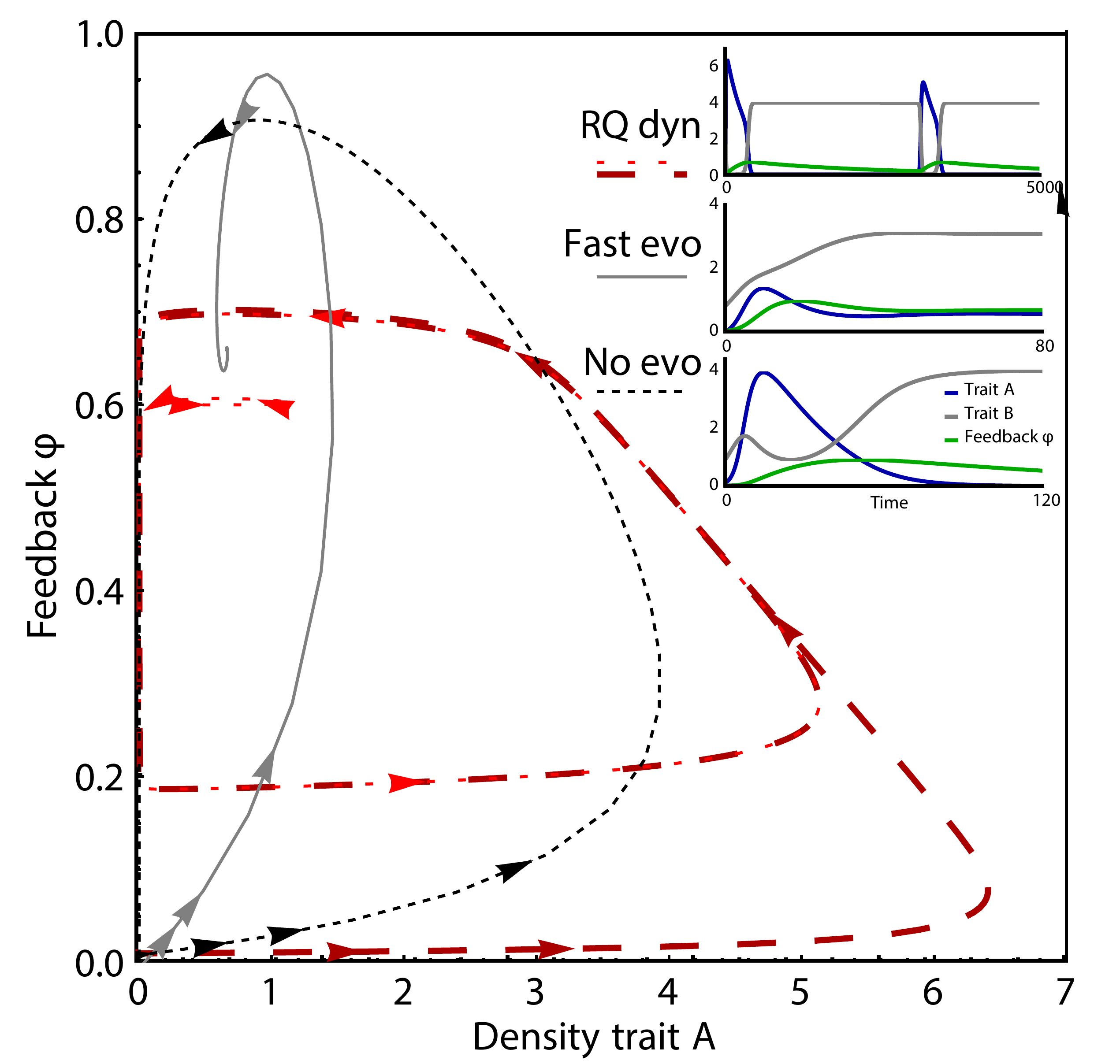}
  \caption{\textbf{Example system with two phenotypes.} Behaviour of a system following the outline of Fig \ref{fig:binaryTheory}A. Phase plane diagram of the density of individuals with phenotype A and the feedback $\varphi$ shows the possible system behaviour depending on the relative timescales of population dynamics, the feedback and mutations. Slow feedback and mutations ($\epsilon_m = 0.00005$ and $\epsilon_e = 0.0005$) leads to continual cyclic evolutionary dynamics (Red Queen dynamics, red lines and top inset), almost independent of initial conditions (2 initial conditions are shown). With fast feedback and mutations ($\epsilon_m = 0.01$ and $\epsilon_e = 0.1$) an equilibrium is reached (gray line and middle inset). No evolution ($\epsilon_m = 0$ and $\epsilon_e = 0.01$) leads to the extinction of one of the traits (black dashed line and bottom inset). Equations are given in the Supplementary Material section \ref{sec:example_twophenovariablepopsize}.}
  \label{fig:binarySystem}
\end{figure}

We only need some restrictions on the functions $f$, $g$ and $h$ and do not use any functional forms in our proof. Figure \ref{fig:binarySystem} illustrates the simulation of an example system. We show continual evolutionary cycles for 2 different initial states if $\epsilon_e$ and $\epsilon_m$ are chosen sufficiently small. With a high mutation rate and/or a fast feedback a stable equilibrium will be reached (leading to stasis), while lack of mutations causes one of the two phenotypes to go extinct.

\subsection{A multi-valued trait}

For studying the case of a multi-valued trait, the evolving trait is represented as a range of values, and the external factor is represented by a scalar quantity (normalised between 0 and 1). We can then describe the system with the following set of differential equations:

\begin{equation}
\begin{aligned}
\frac{du_i}{dt} & = u_i \cdot f(u, \varphi) + \epsilon_m \cdot g(u)\\
\frac{d\varphi}{dt} & = \epsilon_e \cdot h(u, \varphi).
\end{aligned}
\end{equation}

We split the population in discrete groups $u_i$ with similar trait values. We show that the assumptions of a fast positive feedback and a slow negative feedback on the phenotype in this otherwise general model lead to continual evolution. These assumptions are given by restrictions on the functions $f$ and $h$. We can extend the reasoning from the previous section to a multi-valued trait. With a multi-valued trait we mean a trait with more than two phenotypes or an approximation of a continuous trait, such as length or level of toxin production. An overview of the system is shown in Figure \ref{fig:continuousTheory}A, where the trait values are scaled from 0 to 1 according to the strength of their negative feedback. We call these phenotypes $i$, and the density of individuals with phenotype $i$ is denoted by $u_i$. The feedback compound is again called $\varphi$ and can be the same quantities as in the previous section. For the positive feedback we use the population average of the trait, which we call $M$. We assume that when the level of $\varphi$ is so low that phenotypes with higher $i$ have higher growth rates for a certain phenotype distribution, growth rates will keep increasing with $i$ when we increase $M$ (and vice versa, if $\varphi$ is high and growth rates decrease with $i$, a decrease in $M$ will not change this). We use the assumption that the growth rate is either strictly increasing or strictly decreasing in the trait (therefore one of the extreme traits always has the highest fitness).

\begin{figure}
  \includegraphics[width=\textwidth]{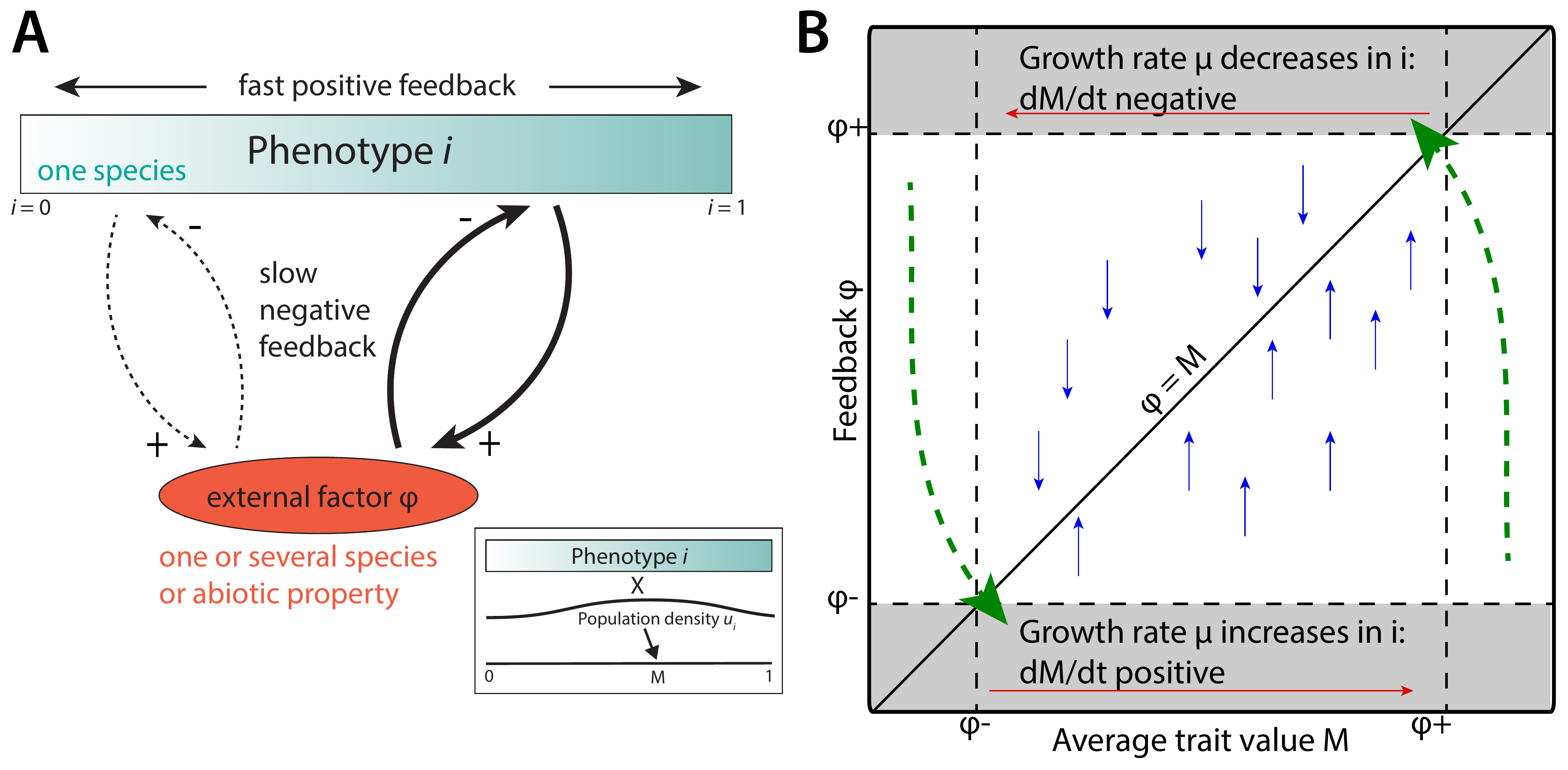}
  \caption{\textbf{System with a many-valued trait.} \textbf{A} One species has a multi-valued trait, with a phenotype ranging from $i=0$ to $i=1$. Higher values of $i$ denote a stronger interaction with the negative feedback $\varphi$, which can be a species, an abiotic factor or a property of the ecosystem. The positive feedback is implemented as a high average trait value $M$ leading to higher growth rates for larger trait values. The inset shows that $M$ is calculated by taking the average trait value in the population, thus weighing the trait values by the density of that phenotype. \textbf{B} Outline of the proof in the ($\varphi$, $M$) plane. $\varphi$ always goes to $M$ (blue arrows), while M quickly increases when $\varphi$ is small and quickly decreases when $\varphi$ is large (red arrows). The combination of these effects leads to continual oscillating evolutionary dynamics (green arrows).}
  \label{fig:continuousTheory}
\end{figure}

Analogous to the system with two phenotypes, we can prove the existence of evolutionary fluctuations under conditions mimicking our earlier assumptions. In the multiple phenotype setting there is one subtle point in the proof, where an additional assumption is needed to guarantee relatively quick changes in phenotypes. One possible additional assumption guaranteeing such rapid fluctuations is the existence of mutations from any trait value to any other trait value. In practical settings this assumptions will not always be applicable. For example, the assumption is more realistic when the trait value describes enzyme levels or levels of toxin production, than when considering length of a species, as we expect length to evolve gradually. An alternative assumption will be discussed later.

The proof that we can choose the timescales to obtain continual evolutionary dynamics is outlined in Fig. \ref{fig:continuousTheory}B, and described in detail in the Supplementary Information section \ref{sec:multivaluedtrait}. Here we provide an outline: We scale $\varphi$ also between 0 and 1 and without loss of generality may assume that $\varphi$ tends to $M$. Then, when for example $M$ is high and $\varphi$ is low, $\varphi$ will increase (blue arrows in Fig. \ref{fig:continuousTheory}B) because $\varphi$ goes to $M$. At some point $\varphi$ and $M$ will cross, in which case $M$ must already be declining and will continue to decline. The assumed fast positive feedback ensures that $M$ decreases drastically in a time interval in which $\varphi$ hardly changes, leading to symmetric conditions and hence continual evolutionary oscillations (green arrows in Fig. \ref{fig:continuousTheory}B).

\begin{figure}[!htbp]
  \includegraphics[width= \textwidth]{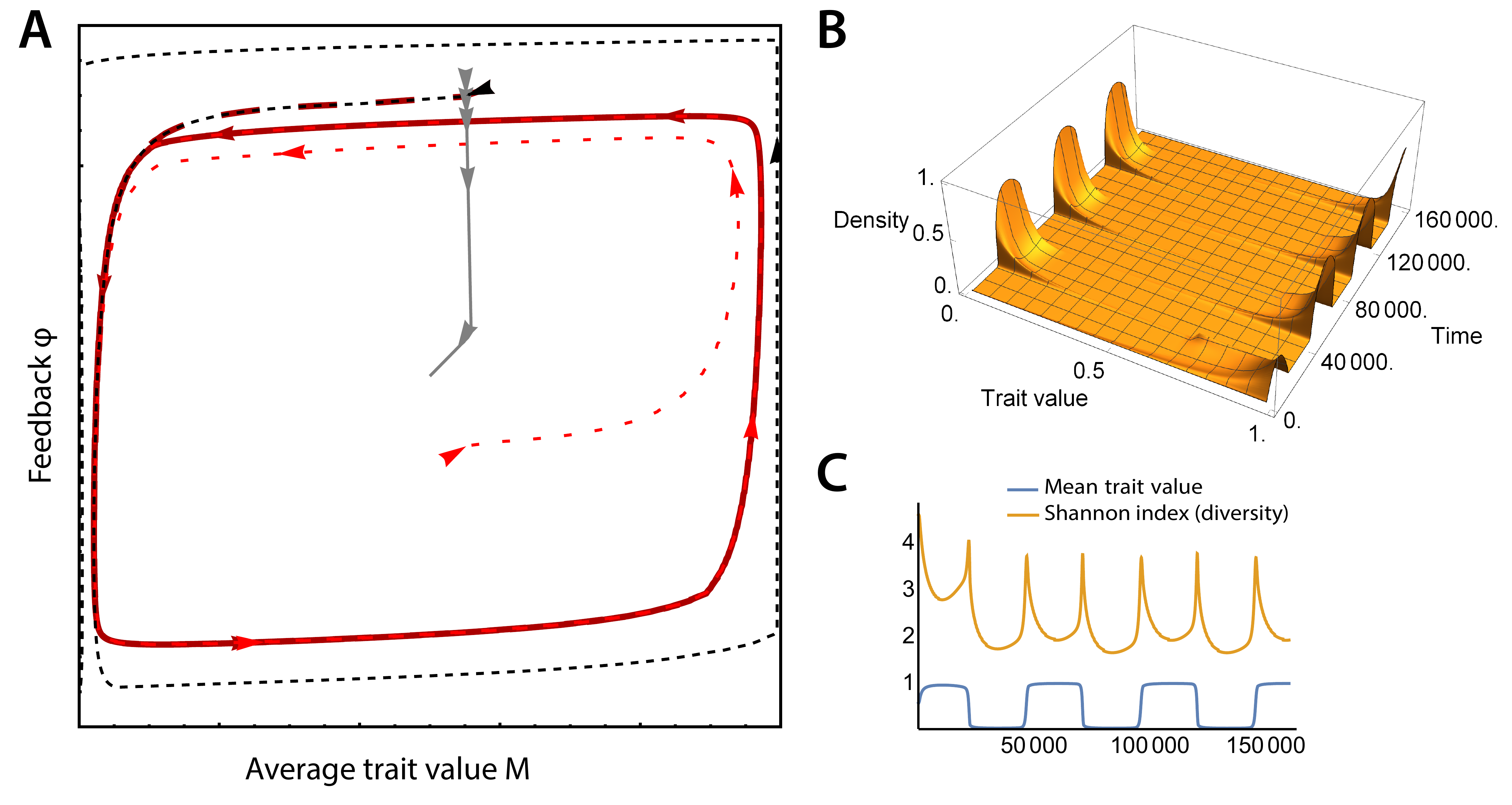}
  \caption{\textbf{Dynamics of a system with a multi-valued trait.} Behaviour of a system following the outline of Figure \ref{fig:continuousTheory}. \textbf{A} System behaviour for no evolution (thin dashed lines), slow evolution (red and solid dashed lines; RQ dynamics from different intital values) and fast evolution (gray dashed line, leading to an equilibrium). \textbf{B} Phenotype abundances from time simulations of the system with RQ dynamics. \textbf{C} Mean trait value and Shannon diversity (a diversity measure) show the oscillations. Simulation of 100 phenotypes equally distributed over the range of trait values (40 phenotypes for the phenotype abundances). See Supplementary Information section \ref{sec:multivaried} for the equations and parameters.}
  \label{fig:continuousSystem}
\end{figure}

Again our proof is context free and applies to a wide variety of equations; an example is given in Figure \ref{fig:continuousSystem}. In Figure \ref{fig:continuousSystem}A demonstrates that turning off evolution leads to extinction of all but one phenotype, fast feedback leads to an equilibrium, and slow feedback leads to continual fluctuations. Figure \ref{fig:continuousSystem}B shows the distribution of phenotypes over time, and in \ref{fig:continuousSystem}C it is shown that the diversity of phenotypes increases at shifts of the mean trait value.

If we do not want to assume the existence of mutations from any phenotype to any other phenotype, we can adopt an alternative assumption, namely that the negative feedback is chosen sufficiently slow \emph{depending on} the rate of mutations. In some circumstances this may even mean that the rate at which $\varphi$ changes is even slower than the mutation rate. This alternative assumption again may or may not be desirable in practice.

\subsection{Extensions to multiple feedbacks}

In natural systems, multiple traits determine the fitness in an individual. An extension of our model to more traits will give a more realistic picture of the dynamics we expect in natural populations. We extended the system to a species with two different traits that are both associated with a different negative feedback. The number of possible phenotypes is now the product of the number of phenotypes for each of the traits. In the most simple case, there are four phenotypes in the population: (1) high susceptibility for both feedbacks, (2) low susceptibility for both feedbacks, and (3 and 4) high susceptibility for one and low for the other (Figure \ref{fig:TwoFeedbacks}A). Simulations with this extended system show that this system can also lead to continual adaptation and that the dynamics generally become irregular (Figure \ref{fig:TwoFeedbacks}B). Looking at the change of the diversity index over time, we can conclude that there is no fixed period in the dynamics (Figure \ref{fig:TwoFeedbacks}C). The example with multiple traits shows that we do not expect a natural system to come back to exactly the same state in a regular fashion.

\begin{figure}[!htbp]
  \includegraphics[width= \textwidth]{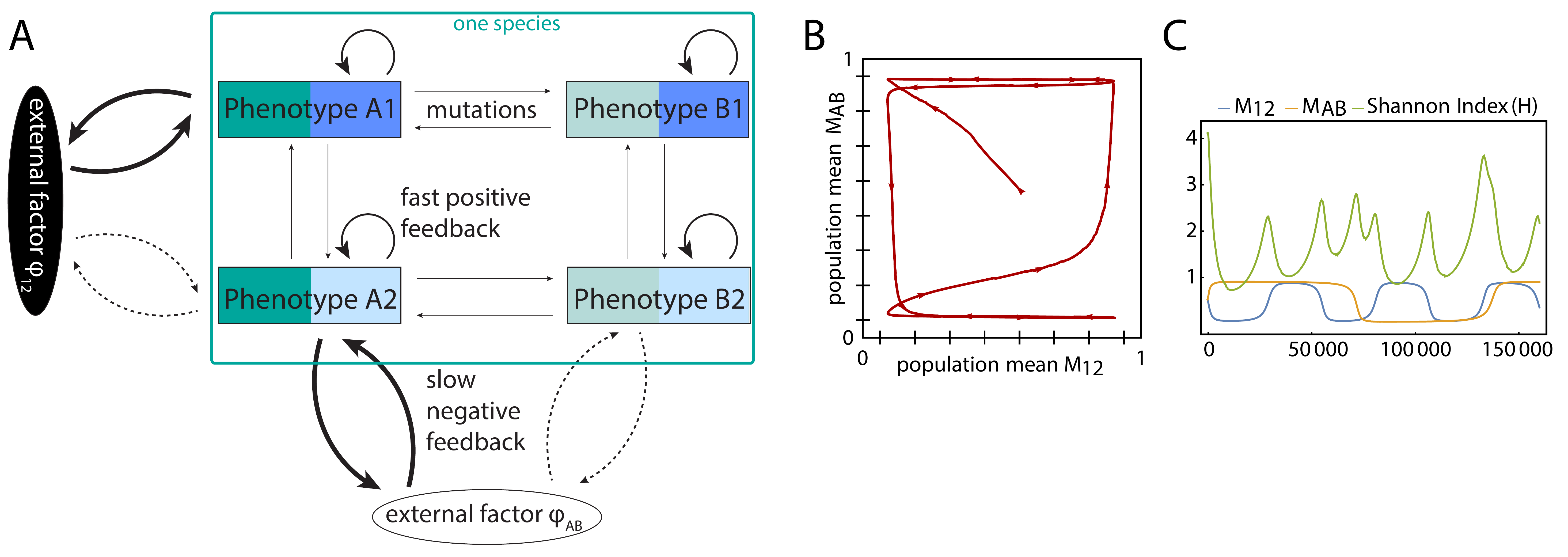}
  \caption{\textbf{Species with two traits associated with two different feedbacks.} \textbf{A} There are two external feedbacks ($\varphi_{1}$ and $\varphi_{A}$) and in the simplest case two phenotypes associated with each feedback, leading to 4 phenotypes in total. More phenotypes per trait are also possible, leading to a total number of phenotypes as the product of the phenotypes for each trait. \textbf{B} The change in mean population phenotype ($M$) for both traits ($i_1$ and $i_A$) in time for an example with more phenotypes per trait. Ticks on the axes show possible individual phenotypes used in the simulation.  \textbf{C} Mean trait values and diversity over time. With more than one trait periods are not as regular as can be seen from the non-periodic Shannon Index ($H$). Equations and parameters for the figure are given in the Supplementary Material section \ref{sec:example_twofeedbacks}.}
  \label{fig:TwoFeedbacks}
\end{figure}

\section{Discussion}

%\overbrace{}
We have shown that a simple motif of fast positive feedback and slow negative feedback leads to continual evolution. Instead of working with a specific model we have shown that continual evolution occurs for other similar models. We allow for phenotypic variation in the population and large effect mutations. We automatically show robustness to model form, because the evolutionary dynamics do not depend on specific equations or parameter values, and also robustness to small fluctuations. Large fluctuations might affect the dynamics, but because our dynamics are reachable from almost all initial conditions, the system will return to continual evolution after a large fluctuation. Previously, it has been shown that continual evolution can be found in specific simple models. In the literature robustness is indicated by showing that a range of parameter values or a different equation will lead to the same result \cite{dercole2006coevolution,mougi2010evolution}. Here we try to avoid the use of specific equations and parameter values to obtain more general results and stronger robustness. Previous examples of papers not using specific functional forms consider the possibility of the occurrence of evolutionary sliding \cite{dercole2006coevolution}, the conclusion that symmetric interactions are more likely to lead to stasis \cite{nordbotten2016asymmetric} and the Red Queen evolution in bacterial communities with cyclic inhibition in three species \cite{bonachela2017eco}, although all of these papers include some restrictions on the models. \cite{dercole2006coevolution} focusses only on slow-fast models with separated timescales, \cite{nordbotten2016asymmetric} only consider bilinear species interactions and \cite{bonachela2017eco} only consider a fixed set of three species that form a non-transitive cycle with their species interactions.

Our results are accompanied by specific examples, but the results are not restricted to those examples. In all our examples the continual evolutionary dynamics stem from switching between multiple ecological attractors that are steady states (ecogenetically driven Red Queen Dynamics mode B\cite{khibnik1997three}). Our conclusions remain the same when these attractors are limit cycles instead of steady states. An example of this type of dynamics is given in \cite{khibnik1997three} Figure 4 and, interestingly, directly using our assumptions on this example removes the evolutionary oscillations (only ecological cycles remain, with ecologically driven Red Queen Dynamics---that is fluctuations in traits are fast and small and only follow the ecological population density dynamics). However, when we make the negative feedback slower by decreasing the parameters for the predator dynamics tenfold (parameters r4 and $\gamma$), we retrieve the ecogenetically driven Red Queen Dynamics (results not shown).

Some of our results are in line with and extend previous conclusions. The dynamics of the continual evolution are very similar to microevolutionary Red Queen dynamics; the difference is that continual evolution does not require co-evolution. \cite{khibnik1997three} mention that Red Queen dynamics with a single evolving trait are possible when the dynamics are ecologically driven (the traits follow ecological dynamics) or ecogenetically driven and switching between two different ecological attractors. Continual evolution of a single trait might be quite prevalent, since examples include predator-prey systems in which the prey evolves but the predator does not (or much slower; as might be relatively common, see e.g. \cite{vermeij1982unsuccessful}). Moreover, our two trait example reproduces the result that fast adaptation is less likely to lead to RQ dynamics \cite{mougi2010evolution}. A new observation is that for a multi-valued trait, evolution should not be too slow relative to the negative feedback, and that ecology and evolutionary dynamics can influence each other. The rationale behind this is that the average population trait has to change drastically while the level of the negative feedback variable changes relatively little. If a trait evolved gradually that means that evolution should not be too much slower than the dynamics of the feedback, because then the feedback would 'catch up' with the average population trait while this average is near the equilibrium value. An experimental example of such kind of dynamics is that in cases such as a predator-prey system with a slow predator, the non-optimal predator phenotypes might stay in the population and not go extinct. In that case the species does not need to get a de novo mutations, but can get the new phenotype through selection and recombination. An example is a laboratory system of an algae and a rotifer \cite{yoshida2003rapid}. This type of RQ dynamics where the traits stay in the population will be more common in higher organisms and is an example of evolutionary rescue. Faster evolutionary change through mutations might also occur when species are regularly exposed to different environments and evolve adaptability \cite{crombach2007chromosome,crombach2008evolution}.

\begin{figure}[!htbp]
  \includegraphics[width= \textwidth]{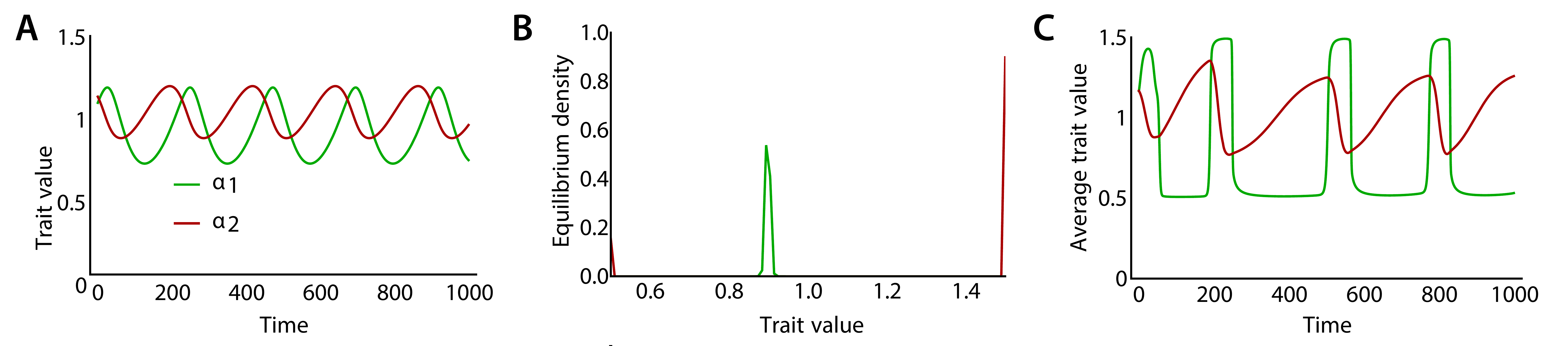}
  \caption{\textbf{RQ dynamics disappears when adding polymorphism to a population.} \textbf{A} The competitor-competitor model of \cite{khibnik1997three} shows the evolvable trait ($\alpha$) of the two competitors following each other in a RQ manner (figure replicated from figure 2 in \cite{khibnik1997three}). \textbf{B} When we allow for a polymorphic population the system tends to a bimodal distribution for the trait $\alpha_2$. \textbf{C} When we change the timescales (we made the dynamics of competitor two ten times slower), the RQ dynamics return. For equations and parameters see \ref{sec:literature_models}.}
  \label{fig:OtherModels}
\end{figure}

Phenotypic variation is often left out of Red Queen dynamics analysis. Outright restricting phenotypic variation can have a pronounced effect on the results. We have adapted a competitor-competitor model (Figure \ref{fig:OtherModels}A) with originally an adaptive dynamics approach that previously showed Red Queen Dynamics to an instance with our assumptions (the possibility of a polymorphic trait distribution and larger mutations). As shown in Figures \ref{fig:OtherModels}B, switching to the modelling approach used in this paper removes the RQ dynamics that were found in the original paper. However, when we change the timescales and make the feedback (competitor species in this example) slower, we retrieve the feedback, in line with the results of this paper (see Figure \ref{fig:OtherModels}C). In \cite{mougi2010evolution}, the RQ dynamics remain with phenotypic variation (data not shown), but there the RQ dynamics are ecologically driven: the adaptation follows the ecological predator-prey cycles. Other methods that do include phenotypic varions are the model by \cite{dieckmann1995evolutionary}, which does contain a polymorphic model which, different from our model, is also stochastic; models of evolutionary branching and extinction using adaptive dynamics (summarized in \cite{kisdi2002red}); and \cite{van1995predator}, where a Lotka-Volterra predator-prey model with a polymorphic population similar to our methods shows examples of RQ dynamics.

Collecting evidence of RQ dynamics from natural systems is a difficult task \cite{abrams2000evolution} and in experiments the long term of the measurements is a problem. One experiment that shows the effect of evolution on predator-prey dynamics \cite{mougi2010evolution} suggests RQ dynamics, but due to the measurement time it cannot be excluded that the cycles will dampen. While there is an increase in long term adaptation experiments these are usually in bacterial systems while most of the modelling has focused on sexually reproducing predator-prey communities. Here we have tried to include these systems by allowing for phenotypic variation and larger mutation effect sizes. In bacterial systems polymorphic populations are common, even under reasonably constant conditions as recently shown in the Long Term Evolution Experiment with \textit{E. coli} \cite{good2017dynamics}. The polymorphic trait distribution might result from density dependent dynamics, as shown theoretically in a chemostat \cite{wortel2016evolutionary}. Therefore our results are an important addition for linking theoretical to experimental observations of evolutionary dynamics.

Although we have tried to keep the modelling in this paper general, our results have some limitations. We show that with some assumptions we can guarantee that continual dynamics will arise with enough time scale separation. Systems that do not follow our assumptions (e.g. fast negative feedbacks, no positive feedback) are more likely to lead to stasis. However we do not prove any conditions that necessarily lead to stasis. Although we tried to keep our equations general, we did not include individual variation within the population. It would be interesting to see if we can combine the methods used in this paper with the polymorphic population and link of the ecological and evolutionary timescales with individual-based models. We limited our results here to only one evolving population, but the feedback we describe could come from another evolving species or be a result of a complex ecosystem. A next step would be to extend this mathematical approach to more than one evolving species. The results in this paper are constrained to micro-evolutionary dynamics, it would be interesting to extend these results to macro-evolutionary phenomena. In this line, there is a nice analogy with the results of \cite{doebeli2017diversity}. They show that most coevolutionary dynamics are found with intermediate diversity, where not all niches are filled. Here we see that if we get diversification, and therefore more different phenotypes, the co-evolutionary dynamics cease.

Without constraining both the functional forms used in our model and the phenotypic diversity within the system we have demonstrated that a fast positive feedback combined with a slow negative feedback always leads to continual dynamics with the proper timescales. By so doing we have improved the understanding of continual evolution and co-evolution in a large class of models, and may be used to predict evolutionary dynamics without knowledge of the exact equations describing the system.

\newpage

\section{Supplementary information to:\\Coupled fast and slow feedbacks lead to continual evolution: A general modeling approach}

\subsection{One species with two phenotypes and a fixed population size} \label{sec:twophenofixedpopsize}

\subsubsection{The system}

We first consider the simplest possible system where the combination of a slow negative and a fast but weaker positive feedback cause continual evolution. We consider a single species with two strains $A$ and $B$, and we assume that the total population size $A+B$ remains constant. We will write
$$
R = \frac{A}{A+B},
$$
hence $R$ takes on values in the interval $[0,1]$. We introduce a second variable $\varphi$ playing the role of a negative feedback, and will be assumed to also take on values in the interval $[0,1]$. We assume that the rates at which the concentrations $A$, $B$ and $\varphi$ vary depends only on these three variables. Since $A+B$ is assumed to remain constant we can reduce the system of differential equations to the following suggestive form
$$
\begin{aligned}
\frac{dR}{dt} & = f(R,\varphi) + \epsilon_m \cdot g(R),\\
\frac{d\varphi}{dt} & = \epsilon_e \cdot h(R,\varphi).
\end{aligned}
$$
Here $f, g$, and $h$ are assumed to be continuously differentiable, and the constants $\epsilon_m$ and $\epsilon_e$ are assumed to be sufficiently small. The function $f$ represents the ecological dynamics of the populations $A$ and $B$. Since an extinct population cannot reproduce we naturally obtain the extinction assumption
\begin{enumerate}
\item[(EXT)] The function $f$ takes on the value $0$ when $R \in \{0,1\}$.
\end{enumerate}

The function $g$ represents mutations from population $A$ to population $B$ and vice versa. While $\varphi$ will influence the ecology, we assume that $\varphi$ does not affect the mutation rates. Since it is natural to assume that the number of mutations increases with the size of the population, we obtain the mutation assumption
\begin{enumerate}
\item[(MUT)] The function $g$ is decreasing in $R$, $g(0) > 0$ and $g(1) < 1$.
\end{enumerate}

The constants $\epsilon_m$ and $\epsilon_e$ represent the different time scales of the dynamical system. By choosing $\epsilon_m$ and $\epsilon_e$ sufficiently small, we guaranteed that slower time scales for the mutations and for the ecology of $\varphi$ than for the ecology of $A$ and $B$. %The very slow time scale $\epsilon_m$ corresponding to the mutations, and a possibly less slow time scale corresponding to the ecological adaptation of the variable $\varphi$. However, we will still work with values of $\epsilon_e$ that are quite small, so in effect we will have three different time scales. The fastest time scale corresponds to population dynamics, i.e. the dynamics caused by the function $f$.

We make three further assumptions: a condition on $f$ that guarantees a positive feedback in the $R$-variable, and two conditions that combined guaranteed a stronger negative feedback on $R$ caused by the slowly adapting variable $\varphi$.

\begin{enumerate}
\item[(PF)] The function $f$ is strictly decreasing in $\varphi$, and
$$
\frac{\partial f}{\partial R} > 0.
$$
\item[(NF1)] The function $h$ is decreasing in $\varphi$ and increasing in $R$.
\item[(NF2)] There exist $0 < \varphi_{--} < \varphi_{++} < 1$ such that for $\varphi < \varphi_{--}$, the value of $f$ is strictly positive for all $R\in(0,1)$, and for $\varphi > \varphi_{++}$, the value of $f$ is strictly negative for all $R\in(0,1)$. Similarly, given any $\varphi \in (0,1)$ the function $h(R,\varphi)$ is stricly positive for $R$ suficiently large, and strictly negative for $R$ sufficiently small.
\end{enumerate}

We will prove that this combination of assumptions guarantees large fluctuations in the values of $R$ and $\varphi$.

Let $K \subset (0, 1) \times (0,1)$ be a closed subset of states $(R, \varphi)$.

\begin{lemma}\label{lemma1}
When $\epsilon_m$ and $\epsilon_e$ are chosen sufficiently small, the orbit of almost every initial condition $(R_0, \varphi_0) \in K$ will eventually leave $K$.
\end{lemma}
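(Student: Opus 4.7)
The plan is a slow-fast (singular perturbation) analysis around the limit $\epsilon_e = \epsilon_m = 0$. First I would freeze $\varphi$ and study the one-dimensional fast subsystem $dR/dt = f(R,\varphi)$. Combining the extinction assumption (EXT) with the positive-feedback structure (PF) and the sign conditions in (NF2) gives the following picture: for $\varphi < \varphi_{--}$ the only attractor on $(0,1)$ lies near $R=1$; for $\varphi > \varphi_{++}$ the only attractor lies near $R=0$; and on the intermediate interval $(\varphi_{--},\varphi_{++})$ the fast flow is bistable, with stable branches $R_{\mathrm{high}}(\varphi)$ near $1$ and $R_{\mathrm{low}}(\varphi)$ near $0$ separated by an unstable middle equilibrium $R_{\mathrm{mid}}(\varphi)$. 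Turning on the mutation term $\epsilon_m g$ moves the $R=0,1$ equilibria strictly inside $(0,1)$ by amounts of order $\epsilon_m$ (this uses (MUT) and the implicit function theorem), and each stable branch terminates at a saddle-node bifurcation where it collides with $R_{\mathrm{mid}}$.

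Next I would apply Fenichel's theorem to the normally hyperbolic portions of the two stable branches. For $\epsilon_e$ sufficiently small, any orbit whose initial $R$ lies in the basin of, say, the high branch is attracted in fast time to an $O(\epsilon_e)$-neighborhood of that branch and subsequently drifts along it at rate $d\varphi/dt \approx \epsilon_e \cdot h(R_{\mathrm{high}}(\varphi),\varphi)$. By (NF1) and (NF2), $h$ is strictly positive along $R_{\mathrm{high}}$, so $\varphi$ increases monotonically until it reaches $\varphi_{++}$; there the high branch disappears and the fast flow sweeps $R$ down to the low branch. The symmetric argument on the low branch, where (NF2) gives $h<0$, completes a singular relaxation cycle passing through arbitrarily small neighborhoods of $R=0$ and $R=1$, and in particular exiting the compact set $K\subset(0,1)\times(0,1)$.

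The exceptional set of initial conditions that do not approach either stable branch consists of points whose fast-time trajectory limits to the unstable middle curve. These form the stable manifold of the saddle curve $\{(R_{\mathrm{mid}}(\varphi),\varphi)\}$, which persists as a smooth one-dimensional invariant curve for the perturbed system by normal hyperbolicity, and hence has two-dimensional Lebesgue measure zero; thus almost every orbit is captured by the relaxation cycle and eventually leaves $K$. The main obstacle I expect is a rigorous treatment of the passage through the saddle-node points $\varphi = \varphi_{--}$ and $\varphi = \varphi_{++}$, where Fenichel's theorem degenerates. Here I would use a blow-up argument (or a direct Mishchenko--Rozov type estimate) to show that the jump from the vanishing branch to the surviving branch takes a time of order $\log(1/\epsilon_e)$, which is still negligible compared with the slow drift along the branches and therefore does not allow an orbit to linger inside $K$; the rest of the argument is then bookkeeping.
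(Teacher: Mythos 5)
Your proposal takes a genuinely different and far heavier route than the paper's. The paper's proof of this lemma is essentially two lines: by (PF), $\partial f/\partial R$ is bounded below by a positive constant on the compact set $K$, so for $\epsilon_m,\epsilon_e$ sufficiently small the divergence $\partial_R\bigl(f+\epsilon_m g\bigr)+\partial_\varphi\bigl(\epsilon_e h\bigr)$ is strictly positive throughout $K$; by Liouville's formula the flow strictly expands area there, so no set of initial conditions of positive measure can remain forever in the finite-area set $K$, and hence almost every orbit leaves. This argument uses the hypotheses only on $K$ (the paper explicitly remarks that the assumptions need not, and in its examples do not, hold globally --- indeed a function strictly increasing in $R$ cannot vanish at both $R=0$ and $R=1$ as (EXT) requires). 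Your Fenichel-type relaxation-oscillation analysis, by contrast, needs the bistable structure of the fast subsystem on all of $(0,1)\times(0,1)$, plus fold/blow-up estimates near $R=0,1$, and is really a strategy for the main theorem rather than for this lemma; what the divergence argument buys is precisely that none of this global machinery is needed here.

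Beyond being more than is required, there is a concrete gap in your measure-zero step. You assert that the exceptional set is the stable manifold of the perturbed middle branch, but to conclude that almost every orbit leaves $K$ you must also exclude positive-measure basins of attractors \emph{contained in} $K$: equilibria of the full planar system (points where $f+\epsilon_m g=0$ and $\epsilon_e h=0$, which do exist in the region and which the paper's subsequent lemma confines to a rectangle), as well as possible attracting periodic orbits or more exotic attractors inside $K$. Your slow-fast picture can be patched for the equilibria (they lie near $\{f=0\}$, where $\partial f/\partial R>0$ forces an unstable eigenvalue), but you do not say this, and ruling out attracting invariant sets of positive basin inside $K$ without further input is exactly the job the positive-divergence argument does. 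Finally, the fold analysis you single out as the main obstacle is irrelevant to this particular statement: the folds occur where the middle branch collides with the branches near $R=0$ and $R=1$, i.e.\ outside any compact $K\subset(0,1)\times(0,1)$.
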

\begin{proof}
By assumption (iii), the differential equation will have strictly positive divergence on $K$ by choosing $\epsilon_m$ and $\epsilon_e$ sufficiently small. In other words, the dynamical system will be strictly area expanding on $K \cap \{R \ge 1\}$ with respect to the standard Euclidean area. It follows that it is not possible for the orbits of a set of positive area to remain in $K$.
\end{proof}

Note that for this result to hold on a given set $K$, assumptions (iii) is only necessary on $K$. In fact, in our simulations below these two assumptions will not be globally satisfied.

\begin{lemma}
By choosing $\epsilon_m$ and $\epsilon_e$ sufficiently small, almost every orbit converges to an invariant closed curve not completely contained in $K$.
\end{lemma}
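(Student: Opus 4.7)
The plan is to promote Lemma \ref{lemma1} to a Poincar\'e--Bendixson style conclusion by first trapping every orbit in a compact rectangle $\mathcal{R}\subset(0,1)^2$, and then showing that the only possible $\omega$-limit set inside $\mathcal{R}$ is a periodic orbit. Since the phase space is two-dimensional, once trapping and the absence of attracting equilibria are established, Poincar\'e--Bendixson will produce exactly the closed invariant curve required.

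For the trapping step I would exploit (EXT), (MUT) and (NF2). By (EXT) the ecological term $f$ vanishes on the vertical edges $R=0$ and $R=1$, and by (MUT) the mutation terms $\epsilon_m g(0)>0$ and $\epsilon_m g(1)<0$ push the flow strictly inward along them (reading the upper bound in (MUT) as $g(1)<0$, as the dynamics require). For the horizontal edges I would pick $\varphi_-<\varphi_{--}$ and $\varphi_+>\varphi_{++}$. On the line $\varphi=\varphi_-$, the first clause of (NF2) gives $f(R,\varphi_-)>0$ for all $R\in(0,1)$, so on the fast timescale $R$ is driven into a neighbourhood of $1$; the second clause of (NF2) then yields $h(R,\varphi_-)>0$ there, and because $d\varphi/dt=\epsilon_e h$ varies slowly, the vector field crosses this edge upward. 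A symmetric argument works on $\varphi=\varphi_+$. Choosing $\epsilon_e$ small compared with the $R$-relaxation time makes this transversality rigorous and defines a forward-invariant rectangle $\mathcal{R}$.

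The second step is to forbid attracting equilibria inside $\mathcal{R}$. Assumption (PF) yields $\partial_R f>0$, so for $\epsilon_m,\epsilon_e$ sufficiently small the divergence of the full vector field remains strictly positive on $\mathcal{R}$, exactly as in the proof of Lemma \ref{lemma1}. By the same Liouville/area-expansion argument, no equilibrium in $\mathcal{R}$ can be asymptotically stable, and no closed set of positive area can be forward-invariant with contracting boundary; hence every equilibrium in $\mathcal{R}$ is a source or a saddle, and the union of stable manifolds of the (finitely many, generically transverse) saddles has measure zero. Applying Poincar\'e--Bendixson on $\mathcal{R}$ minus small repelling neighbourhoods of these equilibria then forces every remaining orbit to converge to a periodic orbit, which is the desired invariant closed curve; by Lemma \ref{lemma1} it cannot be contained entirely in $K$.

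I expect the main obstacle to be this second step: not merely excluding stable equilibria, but ruling out attracting heteroclinic polycycles among the unstable equilibria and showing that the basin of the limit cycle has full Lebesgue measure, rather than just being open and dense. For small $\epsilon_m,\epsilon_e$ the cleanest route is to use the slow--fast structure --- the singular limit is a relaxation oscillation between the two stable branches $R\approx 0$ and $R\approx 1$ of the critical manifold $\{f+\epsilon_m g=0\}$, with folds near $\varphi=\varphi_{--}$ and $\varphi=\varphi_{++}$, and a perturbation argument along those branches identifies a unique attracting limit cycle close to this singular cycle and thereby rules out polycycles for $\epsilon_m,\epsilon_e$ small enough.
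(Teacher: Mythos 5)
Your proposal is correct and follows essentially the same route as the paper: Poincar\'e--Bendixson, confinement of all fixed points to a rectangle $[a,b]\times[c,d]$ via (NF2), and the positive divergence coming from (PF) to rule out stable equilibria and any invariant closed curve contained in that rectangle. The extra work you supply --- the explicit trapping region and the exclusion of attracting polycycles via the slow--fast structure --- addresses points the paper's own (much terser) proof leaves implicit, rather than constituting a different approach.
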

\begin{proof}
By the Poincar\'e-Bendixson Theorem, every orbit must converge to either a fixed point or a invariant closed curve.

Independently of the values of $\epsilon_m$ and $\epsilon_e$, we can construct a large rectangle $[a,b] \times [c,d] \subset (0,1) \times (0,1)$ that must contain all fixed points. Indeed, by choosing $c < \varphi_{--}$ and $d > \varphi_++$ it follows that there are no fixed points for $\varphi<c$ or $\varphi>d$. By assumption (NF2) we can choose $a$ sufficiently small and $b$ sufficiently large so that for $\varphi \in [c,d]$ and $R<a$ the variable $\varphi$ must decrease, while for $\varphi \in [c,d]$ and $R>b$ the variable $\varphi$ must increase. Thus, there are no fixed points outside of the rectangle $[a,b] \times [c,d]$.

As remarked above, the dynamical system can be made area expanding in the rectangle by choosing $\epsilon_m$ and $\epsilon_e$ sufficiently small. Thus, the fixed points in $[a,b] \times [c,d]$ cannot be stable, nor can the rectangle contain an invariant closed curve.
\end{proof}

\begin{thm}
For $\epsilon_e$ and $\epsilon_m$ sufficiently small almost every orbit will converge to a periodic orbit that intersects both $R> b$ and $R<a$.
\end{thm}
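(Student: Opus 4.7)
My plan is to combine Lemma 2 with a singular-perturbation (slow--fast) analysis of the limit cycle. By Lemma 2, almost every orbit converges to an invariant closed curve $\gamma$ not entirely contained in the rectangle $K = [a,b] \times [c,d]$. Under our smoothness assumptions $\gamma$ is generically a periodic orbit, i.e.\ a Jordan curve in the plane, and by the standard index-theoretic consequence of Poincar\'e--Bendixson (the sum of indices of fixed points enclosed by a periodic orbit equals $+1$), $\gamma$ must enclose at least one fixed point; since every fixed point of the system lies in $K$, there is some $p \in K$ in the interior of $\gamma$.

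Next I would exploit the slow--fast structure induced by $\epsilon_m, \epsilon_e$. In the singular limit $\epsilon_m, \epsilon_e \to 0$, the fast $R$-dynamics $\dot R = f(R,\varphi)$ with $\varphi$ frozen has, for intermediate $\varphi$, three equilibria: a stable left branch close to $R = 0$, an unstable middle branch, and a stable right branch close to $R = 1$, with the outer branches colliding with the middle branch in saddle-node bifurcations at $\varphi = \varphi_{--}$ and $\varphi = \varphi_{++}$ (as dictated by (NF2) together with the bistable reading of (PF)). The slow $\varphi$-dynamics drives $\varphi$ upward when $R$ is large and downward when $R$ is small, by (NF1)--(NF2). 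The corresponding singular limit cycle is then a classical relaxation oscillation: slow transport along the right stable branch as $\varphi$ climbs from $\varphi_{--}$ to $\varphi_{++}$, a fast horizontal jump leftward at $\varphi = \varphi_{++}$, slow transport along the left branch as $\varphi$ descends, and a fast rightward jump at $\varphi = \varphi_{--}$.

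The third step is to transfer this singular picture to the actual periodic orbit $\gamma$ for small but positive $\epsilon_m, \epsilon_e$. By Fenichel's geometric singular perturbation theorem, each stable branch of the slow manifold persists as a locally invariant manifold within $O(\epsilon)$ of the singular branch, and the limit cycle $\gamma$ is a perturbation of the singular relaxation cycle lying within $O(\epsilon)$ of it on the slow segments and in an exponentially thin tube along the fast jumps. Consequently the $R$-range of $\gamma$ converges to $[0,1]$ as $\epsilon_m, \epsilon_e \to 0$; in particular, for $\epsilon_m, \epsilon_e$ sufficiently small one has $\min_\gamma R < a$ and $\max_\gamma R > b$, which is the claim. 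The hardest step will be the rigorous use of Fenichel's theorem at the saddle-node points $\varphi_{--}, \varphi_{++}$, where the slow manifold loses normal hyperbolicity; bridging these transitions usually requires geometric blow-up or matched asymptotics, and is the most technically delicate part under the paper's very general assumptions (EXT)--(NF2). A more elementary alternative would be to avoid Fenichel altogether by applying Lemma 2 on an explicit forward-invariant annulus $A_\delta$ around $K$ whose outer boundary forces the orbit past $\{R < \delta\}$ and $\{R > 1-\delta\}$, and then letting $\delta \to 0$; the difficulty then shifts to constructing $A_\delta$ using only the strict sign conditions in $\{\varphi > \varphi_{++}\}$ and $\{\varphi < \varphi_{--}\}$.
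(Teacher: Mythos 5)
Your overall strategy differs substantially from the paper's, and it contains genuine gaps. The paper does not invoke geometric singular perturbation theory at all: it chases the sign conditions directly, showing that once an orbit enters the strip $[R_{++},1]\times[0,\hat\varphi_+]$ it must exit through the top, then cross $\{R=R_+\}$ at a height above $\hat\varphi_+$, then (because $f<0$ throughout $\{R\le R_+,\ \varphi>\varphi_+'\}$ and $\epsilon_e$ is small) descend below any prescribed level $R_{--}$ before $\varphi$ can drop appreciably, and so on by the symmetry of the assumptions. The upshot is that the forward orbit avoids, for all time, a rectangle $K$ containing every fixed point; Poincar\'e--Bendixson then forces convergence to a periodic orbit, and since the orbit itself oscillates between $R>R_{++}$ and $R<R_{--}$, so does its limit cycle. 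Your singular relaxation-oscillation picture is the right heuristic (it matches the paper's Figure 1B), but it does not yet function as a proof.

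The first gap is that step 1 does not deliver what you need. ``Generically a periodic orbit'' is not an argument: Poincar\'e--Bendixson allows the $\omega$-limit set to be a fixed point or a graphic (a union of fixed points and connecting orbits), and genericity is of no help for the specific vector field at hand. Moreover, even granting a periodic orbit $\gamma$, the index argument only tells you that $\gamma$ encloses some fixed point in $K$; it says nothing about the $R$-extent of $\gamma$, which could a priori remain entirely in $\{R>a\}$. The paper closes both holes at once with the explicit excursion argument: no fixed points in the $\omega$-limit set, and the excursions past $a$ and $b$ are quantitatively forced. The second gap is that the Fenichel step is exactly where your proof would have to do its real work, and you leave it unresolved: normal hyperbolicity fails at $\varphi_{--}$ and $\varphi_{++}$, where the interior branch $\{f=0\}$ meets the boundary equilibria (a boundary transcritical exchange of stability rather than a saddle-node, incidentally, since by (EXT) $f$ vanishes identically on $R\in\{0,1\}$), and the paper's hypotheses --- $C^1$ regularity plus sign and monotonicity conditions only --- supply none of the quantitative nondegeneracy that a blow-up or matched-asymptotics analysis would require. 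Your ``more elementary alternative'' in the final sentence is in essence the paper's actual proof; carrying that out, rather than the Fenichel route, is what is needed.
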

\begin{proof}
Let $0<R_- < R_+< 1$ be such that $h$ is strictly positive for $R > R_+$ and $\varphi < \varphi_{++}$, and strictly negative for $R<R_-$ and $\varphi> \varphi_{--}$. Since $f$ is strictly increasing in $R$ and strictly decreasing in $\varphi$, the curve $\{f = 0\}$ is a strictly increasing graph that does not intersect $\{\varphi >\varphi_{++}\}$ or $\{\varphi< \varphi_{--}\}$. Denote its graph by $\Gamma$, and write $(R_+, \varphi_+)$ for the intersection point of $\Gamma$ with $\{R = R_+\}$. Let $R_{++} > R_+$.

By choosing $\epsilon_m$ sufficiently small we can guarantee that $\frac{dR}{dt}>0$ for $R = R_{++}$ and $\varphi < \hat{\varphi}_+$, where $\hat{\varphi}_+ \in (\varphi_+, \varphi_{++})$. It follows that whenever an orbit intersects the rectangle $[R_{++}, 1] \times [0, \hat{\varphi}_+]$, the orbit will remain in this region for finite time until it crosses the upper boundary in some point $(R_t, \hat{\varphi}_+)$. Note that mutations prevent either population from going extinct, so $R_t< 1$.

At time $t$, the variable $\varphi$ will continue to increase for as long as $\varphi< \varphi_{++}$ and $R \ge R_+$. But since $f$ is negative for $\varphi \ge \varphi_{++}$, it follows that the orbit must cross the line $\{R = r_+\}$. The first time this occurs must therefore be in a point $(R_+, \varphi_{t^\prime})$ for which $\varphi_{t^\prime} > \hat{\varphi}_+$.

Since $\hat{\varphi}_+$ is strictly larger than $\varphi_+$, it follows that $f<0$ for all $R \le R_+$ and all $\varphi > \varphi_+^\prime$, where $\varphi_+^\prime \in (\varphi_+, \hat{\varphi}_+)$ can be chosen independent of the exact intersection point $(R_+, \hat{\varphi}_+)$. By choosing $\epsilon_e$ and $\epsilon_m$ sufficiently small, the orbit can be guaranteed to reach any sublevel $R< R_{--} < R_-$ while $\varphi$ remains at least $\varphi_+^\prime$. Since our assumptions are completely symmetric, a symmetric argument shows that $R$ remains smaller than $R_{--}$ until $\varphi < \varphi_-$, where $(R_-, \varphi_-) \in \Gamma$, and later returns to the region $R> R_{++}$.

As remarked earlier, we can make sure that all fixed points are contained in the rectangle $K = [R_{--}, R_{++}], [\varphi_-^\prime, \varphi_+^\prime]$, where $\varphi_-^\prime< \varphi_-$ is defined analogously as $\varphi_+^\prime$. Since $\varphi_+^\prime>\varphi_+$ and $\varphi_-^\prime<\varphi_-$ can be chosen independently of the points $(R_+, \hat{\varphi}_+)$ and $(R_-, d_-)$, it follows that the entire forward orbit avoids the rectangle $K$. Thus the Carleson-Bendixson Theorem implies that the orbit must converge to a periodic cycle. % Moreover, it follows that the orbit of any point outside of $K$ converges to the same periodic orbit.
By Lemma \ref{lemma1} we can guarantee that almost every orbit in $K$ must eventually leave $K$, which completes the proof.
\end{proof}

\begin{figure}
\includegraphics[width=4in]{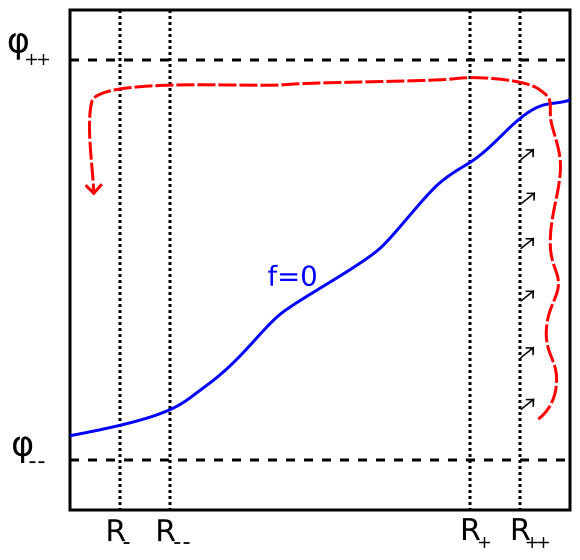}
\caption{In red: orbit avoiding a large rectangle.}
\end{figure}

\subsection{One species with two phenotypes and a variable population size} \label{sec:twophenovariablepopsize}

In the previous section we assumed that $\frac{dR}{dt}$ depends only on $R$ and $\varphi$. When the total population size is not assumed to be fixed, such an assumption is not realistic. However, it turns out that the mechanism causing the drastic fluctuations can still occur without this assumption. The important observation is that to prove the existence of large fluctuations, we merely need to know the \emph{signs} of $\frac{dR}{dt}$ and $\frac{d\varphi}{dt}$, and not the exact values.

Suppose now that we have two distinct traits, with population sizes $A$ and $B$. We do not assume that $A+B$ is constant, but will still write
$$
R = \frac{A}{A+B}.
$$
Thus $R$ takes on values in the interval $[0,1]$. We assume that the total population will remain bounded, from above as well as below, so that $R \rightarrow 0$ corresponds to the extinction of population $A$, and $R \rightarrow 1$ to the extinction of population $B$. One naturally assumes that apart from mutations $dA/dt=0$ when $A = 0$, and similarly $dB/dt = 0$ when $B = 0$, to obtain a system of differential equations of the form
$$
\begin{aligned}
\frac{dA}{dt} & = A \cdot f_A(A, B, \varphi) + \epsilon_m \cdot g(A,B),\\
\frac{dB}{dt} & = B \cdot f_B(A,B, \varphi) - \epsilon_m \cdot g(A,B), \; \; \mathrm{and}\\
\frac{d\varphi}{dt} & = \epsilon_e \cdot h(A, B, \varphi).
\end{aligned}
$$
As before, $\varphi$ will play the role of a strong negative feedback. We will assume that the functions $f_A, f_B, g,$ and $h$ are all continuously differentiable, and the constants $\epsilon_m$ and $\epsilon_e$ will be chosen arbitrarily small.

The existence of a positive ecological feedback on the populations $A$ and $B$ leads to the following assumption:
\begin{enumerate}
\item[(PF)] For all values of $A, B,$ and $\varphi$ we have
$$
%\begin{aligned}
\left(\frac{\partial}{\partial A} - \frac{\partial}{\partial B}\right) f_A  > 0, \; \; \mathrm{and} \; \;%\\
\left(\frac{\partial}{\partial B} - \frac{\partial}{\partial A}\right) f_B  > 0.
%\end{aligned}
$$
\end{enumerate}

Thus, if an amount of $B$ is replaced by an equal amount of $A$ while $\varphi$ remains fixed, the fitness of the population $A$ increases while the fitness of the population $B$ decreases.

The next assumption, which we will refer to as the \emph{unique stable value assumption}, will be used to draw conclusions about the sign of $\frac{dR}{dt}$ without knowing the exact values of $A$ and $B$. We do not claim that this assumption is necessary, but it turns out to be satisfied in many models and is convenient:
\begin{enumerate}
\item[(USV1)] For fixed values of $B$ and $\varphi$, there is a unique value $A_0 \ge 0$ such that
$$
f_A(A, B, \varphi) > 0
$$
for $0 \le A < A_0$, and
$$
f_A(A,B, \varphi) < 0
$$
for $A > A_0$.
\item[(USV2)] For fixed values of $A$ and $\varphi$, there is a unique value $B_0 \ge 0$ such that
$$
f_B(A, B, \varphi) > 0
$$
for $B < B_0$, and
$$
f_B(A,B, \varphi) < 0
$$
for $B > B_0$.
\end{enumerate}

For $\varphi$ fixed we can therefore consider $A_0$ as a graph $\eta_A(B, \varphi)$, and similarly $B_0 = \eta_B(A, \varphi)$. The positive feedback condition guarantees the following:
\begin{lemma}\label{lemma:graph}
The derivatives $\frac{\partial \eta_A}{\partial B}$ and $\frac{\partial \eta_B}{\partial A}$ are both strictly less than $-1$, and possibly $-\infty$.
\end{lemma}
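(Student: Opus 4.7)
The plan is to apply the implicit function theorem to $f_A(A,B,\varphi)=0$ wherever possible, and to complement this with a direct monotonicity argument along the anti-diagonal direction $(A,B)\mapsto(A-t,B+t)$, which makes the geometric content of (PF) manifest.

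Fix $\varphi$ and $B_0$, and let $A_0=\eta_A(B_0,\varphi)$. By (USV1) the sign of $f_A(\cdot,B_0,\varphi)$ switches from positive to negative as $A$ crosses $A_0$, so $\partial f_A/\partial A(A_0,B_0,\varphi)\le 0$. Whenever this inequality is strict, the implicit function theorem yields
\begin{equation*}
\frac{\partial \eta_A}{\partial B}(B_0,\varphi) \;=\; -\frac{\partial f_A/\partial B}{\partial f_A/\partial A}.
\end{equation*}
Because the denominator is strictly negative, the desired bound $\partial \eta_A/\partial B<-1$ is equivalent to $\partial f_A/\partial B<\partial f_A/\partial A$, i.e.\ to $\bigl(\partial/\partial A-\partial/\partial B\bigr)f_A>0$, which is precisely (PF). This disposes of the non-degenerate case.

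For the degenerate case $\partial f_A/\partial A(A_0,B_0,\varphi)=0$, and as a cleaner argument treating both cases uniformly, I would introduce the auxiliary slice $F(t):=f_A(A_0-t,B_0+t,\varphi)$. Differentiating along this line and applying (PF) gives $F'(t)=-\bigl(\partial/\partial A-\partial/\partial B\bigr)f_A(A_0-t,B_0+t,\varphi)<0$, so $F$ is strictly decreasing. Since $F(0)=0$, condition (USV1) forces $\eta_A(B_0+t,\varphi)<A_0-t$ for small $t>0$ and the reverse inequality for small $t<0$, which rearranges to
\begin{equation*}
\frac{\eta_A(B_0+t,\varphi)-\eta_A(B_0,\varphi)}{t}\;<\;-1
\end{equation*}
for all small $t\neq 0$. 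In the non-degenerate case the limit recovers the strict bound already obtained; in the degenerate case the level curve $\{f_A=0\}$ has a vertical tangent at $(A_0,B_0)$ in the $(A,B)$-plane, so $\eta_A$ fails to be differentiable in $B$ at $B_0$ and the (one-sided) derivative is $-\infty$, which is the exceptional case allowed by the statement. The argument for $\eta_B$ is symmetric, with the roles of $A$ and $B$ interchanged.

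The principal obstacle is exactly the degenerate case where $\partial f_A/\partial A$ vanishes and the implicit function theorem is unavailable; this is why the statement includes the "possibly $-\infty$" qualifier. The monotonicity function $F$ circumvents the difficulty by encoding (PF) as a clean geometric assertion, namely that $f_A$ decreases strictly along every line of slope $-1$ in the $(A,B)$-plane, which forces the zero set $\{f_A=0\}$ to be steeper than $-1$ everywhere, regardless of whether it is locally a $C^1$ graph.
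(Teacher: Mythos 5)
Your proof is correct, and its first half is exactly the paper's argument: from (USV1) one gets $\frac{\partial f_A}{\partial A}(A_0,B_0,\varphi)\le 0$ at a zero of $f_A$, from (PF) one gets $\frac{\partial f_A}{\partial B}<\frac{\partial f_A}{\partial A}$, and the Implicit Function Theorem then yields $\frac{\partial \eta_A}{\partial B}=-\frac{\partial f_A/\partial B}{\partial f_A/\partial A}<-1$ in the non-degenerate case. Where you genuinely add something is the degenerate case $\frac{\partial f_A}{\partial A}=0$: the paper disposes of it in one line (since $\frac{\partial f_A}{\partial B}<0$ by (PF), ``hence $\frac{\partial \eta_A}{\partial B}=-\infty$''), implicitly invoking the Implicit Function Theorem with the roles of $A$ and $B$ exchanged, whereas you introduce the anti-diagonal slice $F(t)=f_A(A_0-t,B_0+t,\varphi)$ and use its strict monotonicity together with (USV1) to get the difference-quotient bound $\frac{\eta_A(B_0+t,\varphi)-\eta_A(B_0,\varphi)}{t}<-1$ for all small $t\neq 0$, uniformly in both cases. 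That secant bound is arguably the more useful statement for how the lemma is used downstream (uniqueness of the interior intersection of $\{f_A=0\}$ and $\{f_B=0\}$), and it makes the ``possibly $-\infty$'' caveat precise rather than asserted. Two minor quibbles: taking the limit of the strict secant inequality only gives $\le -1$, so you are right to keep the IFT computation for strictness in the non-degenerate case rather than rely on the limit alone; and your closing claim that the one-sided derivative is exactly $-\infty$ in the degenerate case still tacitly uses the swapped-role IFT (or the secant bound plus monotonicity of $\eta_A$), the same gap the paper leaves. Neither affects correctness.
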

\begin{proof}
We prove the first statement, the second is analogous. Let $A_0, B, \varphi$ be a triple for which $f_A(A_0, B, \varphi) = 0$. The unique stable value assumption implies that
$$
\frac{\partial f_A}{\partial A}(A_0, B, \varphi) \le 0,
$$
and the positive feedback assumption gives
$$
\frac{\partial f_A}{\partial B} < \frac{\partial f_A}{\partial A}.
$$
The function $\eta_A$ is implicitly defined by $f_A(\eta_A(B, \varphi), B, \varphi) =0$. When $\frac{\partial f_A}{\partial A} \neq 0$ the Implicit Function Theorem gives
$$
\frac{\partial \eta_A}{\partial B} = -\frac{\frac{\partial f_A}{\partial B}}{\frac{\partial f_A}{\partial A}}
$$
which gives $\frac{\partial \eta_A}{\partial B} < -1$. When $\frac{\partial f_A}{\partial A} = 0$ we have $
\frac{\partial f_A}{\partial B} < 0$ by assumption (PF), and hence $\frac{\partial \eta_A}{\partial B} = -\infty$.
\end{proof}

We assume the existence of a strong negative feedback:
\begin{enumerate}
\item[(NF1)] The difference $f_A - f_B$ decreases with $\varphi$, while
$$
\left(\frac{\partial}{\partial A} - \frac{\partial}{\partial B}\right) h > 0.
$$
\item[(NF2)] There exists $0< \varphi_{--} < \varphi_{++} < 1$ such that
$$
f_B > f_A
$$
whenever $\varphi> \varphi_{++}$, while
$$
f_A > f_B
$$
whenever $\varphi < \varphi_{--}$.
\end{enumerate}

Hence, for $\varphi$ sufficiently large and $\varphi$ sufficiently small the level sets $\{f_A = 0\}$ and $\{f_B = 0\}$ do not intersect. We may write $\varphi_{++}$ and $\varphi_{--}$ for the maximal resp. minimal value of $\varphi$ for which the level sets intersect. For our argument it will be necessary that both $\varphi_{++}$ and $\varphi_{--}$ can be reached, hence we make a ``transitivity'' assumption:

\begin{enumerate}
\item[(Tr)] The function $h$ is strictly positive when $\varphi \le \varphi_{++}$, $B=0$ and $A = \eta_A(0, \varphi_{++})$. Similarly, the function $h$ is strictly negative when $\varphi \ge \varphi_{--}$, $A=0$ and $B = \eta_B(0, \varphi_{--})$.
\end{enumerate}

Let us consider, for a fixed value of $\varphi$, joint solutions of the two equations
$$
A\cdot f_A = 0 \; \; \mathrm{and} \;\; B \cdot f_B = 0.
$$
It is clear that, besides the origin, there always is a unique solution on each of the axes. For $A, B > 0$ it follows from Lemma \ref{lemma:graph} that there either is no solution or a unique solution, depending on the value of $\varphi$. For $\varphi = \varphi_{++}$ the intersection point of $\{f_A = 0\}$ and $\{f_B = 0\}$ lies in the axis $\{A = 0\}$, while for $\varphi = \varphi_{--}$ the intersection point lies in the axis $\{B = 0\}$. See Figure \ref{figure:phase1} for a simple depiction of the level sets $\{f_A = 0\}$ (in red) and $\{f_B = 0\}$ (in blue).

\begin{figure}
\label{figure:phase1}
\includegraphics[width=2in]{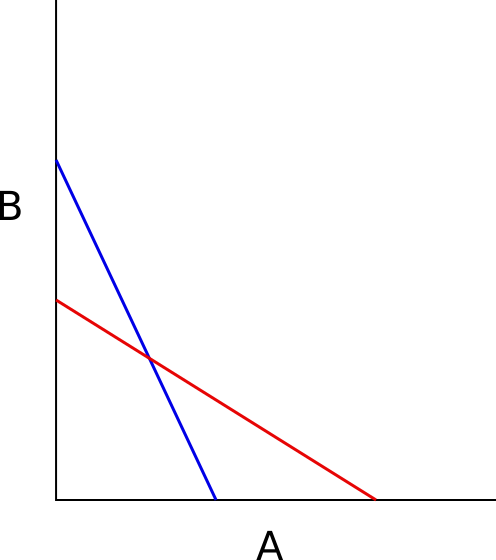}
\caption{The graphs of $\eta_A$ and $\eta_B$ for $\varphi_{--} < \varphi < \varphi_{++}$.}
\end{figure}

\medskip

Let us consider the ecological dynamics in the $(A,B)$-plane caused by the differential equations
$$
\begin{aligned}
\frac{dA}{dt} & = A \cdot f_A,\\
\frac{dB}{dt} & = B \cdot f_B
\end{aligned}
$$
for a fixed value of $\varphi$. When $\varphi \le \varphi_{--}$ or $\varphi \ge \varphi_{++}$ there are three fixed points where
$$
\frac{dA}{dt} = \frac{dB}{dt} = 0,
$$
namely the origin and the two intersection points of the curves $\{f_A = 0\}$ and $\{f_B = 0\}$ with the respective axes $\{B = 0\}$ and $\{A = 0\}$.
The origin is always repelling. One of the points on the axes is a saddle fixed point, with stable manifold equal to the axis.
The third fixed point is attracting, and all orbits of initial values not lying on the axes converge to this attracting fixed point.

When $\varphi_{--} < \varphi < \varphi_{++}$ there are four fixed points. Again the origin is a repelling fixed point. There are again two fixed points on the axes, which are now both attracting. Finally, there is an intersection point of the curves $\{f_A = 0\}$ and $\{f_B = 0\}$, and assumption (i) implies that this is a hyperbolic saddle fixed point. Its stable manifold is the separatrix of the two attracting basins.

\begin{figure}
\label{figure:phase3}
\includegraphics[width=4in]{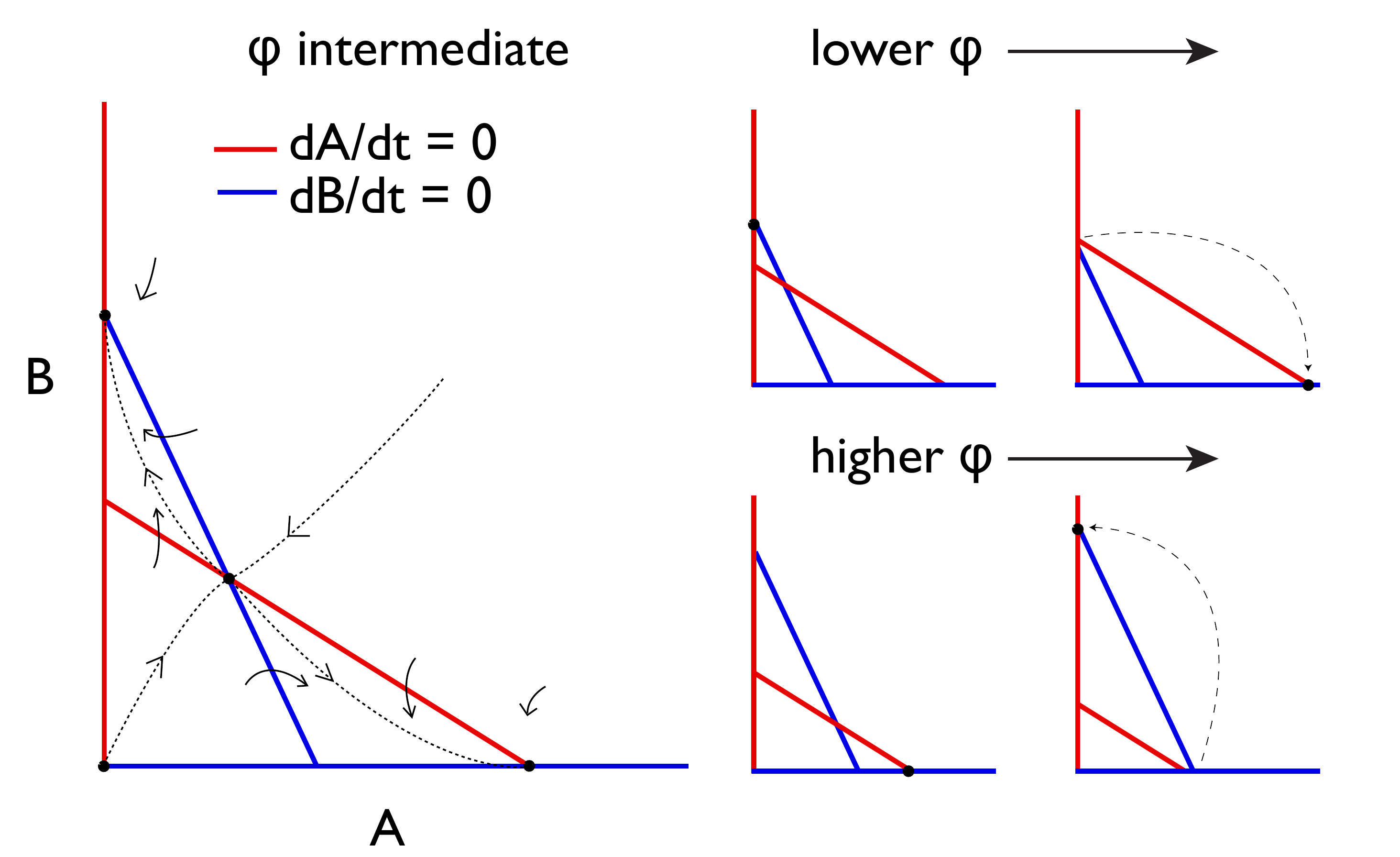}
\caption{Left: Dynamics in the $(A,B)$-plane for fixed $\varphi_{--} < \varphi < \varphi_{++}$. Right: Changes in the phase plane when $\varphi$ decreases below $\varphi_{--}$ (top) or increases above $\varphi_{++}$ (top).}
\end{figure}

\medskip

Let us now consider the effect of the mutations, represented by $\epsilon_m \cdot g(A,B)$, on the dynamics in the $(A,B)$-plane in the case $\varphi_{--} < \varphi < \varphi_{++}$. The behavior near each of the fixed points is stable under small $C^1$-perturbations, and the qualitative behavior of the system is robust. Hence, by choosing $\epsilon_m$ sufficiently small, there will still be a repelling fixed point at the origin, and a saddle point with separatrix near the intersection point of the curves $\{f_A = 0\}$ and $\{f_B = 0\}$. The rest of the quadrant is attracted to neighborhoods of the original attracting fixed points. These neighborhoods can be chosen arbitrarily small by choosing $\epsilon_m$ sufficiently small. Note that the addition of mutations causes the axes to be repelling, hence the attracting fixed point no longer lies on the axis but sufficiently nearby.

\medskip

Finally let us consider the full three-dimensional dynamical system, taking into account that $\varphi$ is not fixed. We assume that for given $t_0$ we have $\varphi(t_0) < \varphi_{++}$, and that $(A(t_0),B(t_0))$ lies in the small attracting region near the $A$-axis. For any $\delta>0$ we can, by taking $\epsilon_e$ and $\epsilon_m$ sufficiently small, assume that $A(t_0) - \eta_A(0,\varphi_{++}) > - \delta$ and $B < \delta$. It follows that $h(A(t_0), B(t_0), \varphi(t_0))$ is strictly positive. It follows that $\varphi$ will continue to grow while $\varphi < \varphi_{++} + \delta$ and while $(A(t),B(t))$ remains trapped in the small attracting neighborhood near the $A$-axis.

Let us consider the first time $t_1$ at which either $\varphi \ge \varphi_{++} + \delta$ or at which $(A, B)$ leaves the small attracting neighborhood. In either case it follows that the orbit $(A(t),B(t))$ is guaranteed to approach the small attracting neighborhood near the $B$-axis. If $\epsilon_e$ is sufficiently small the value of $\varphi$ can only decrease arbitrarily little while this happens. The conclusion is that we end up with a time $t_2>t_1$ when $\varphi(t_1) > \varphi_{--}$ and $(A(t_2), B(t_2))$ lies in the attracting neighborhood near the $B$-axis. By the symmetry of our assumptions the process will repeat itself. We have proved the following.

\begin{thm}
Let $0<a<b<1$, $\varphi_{--} < c < d< \varphi_{++}$, and write $K$ for the rectangle $[a,b]\times [c,d]$. Then for $\epsilon_e$ and $\epsilon_m$ sufficiently small there exists an orbit $(A(t), B(t), \varphi(t)$ for which the coordinates $(R(t), \varphi(t))$ avoid $K$, and for which $R(t)$ fluctuates between values larger than $b$ and smaller than $a$.
\end{thm}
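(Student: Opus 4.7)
The plan is to realize the orbit as a relaxation-oscillation: a sequence of alternating fast transitions in the $(A,B)$-plane interspersed with slow drifts of $\varphi$, where each drift crosses one of the bifurcation thresholds $\varphi_{\pm\pm}$ and thereby annihilates the currently occupied basin of attraction. I would begin by fixing the geometry of the frozen system. For each $\varphi \in (\varphi_{--},\varphi_{++})$ the discussion preceding the theorem identifies two perturbed attracting equilibria $p_A(\varphi), p_B(\varphi)$ near the $A$- and $B$-axes, a saddle separatrix in between, and a repelling origin. By choosing $\epsilon_m$ small I can produce small uniform neighborhoods $U_A(\varphi), U_B(\varphi)$ around these attractors, which are forward-invariant under the frozen dynamics and depend continuously on $\varphi$.

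The first main step is the slow phase. Starting from an initial state in $U_A(\varphi_0)$ with $\varphi_0$ slightly less than $\varphi_{++}$, I would use the transitivity assumption (Tr) together with continuity of $h$ to produce a constant $\eta>0$ such that $h > \eta$ uniformly on $U_A(\varphi)$ for all $\varphi$ in a sufficiently small neighborhood of $\varphi_{++}$. Hence $\varphi$ increases at rate at least $\epsilon_e \eta$, and in time $O(1/\epsilon_e)$ crosses $\varphi_{++}$. During this slow drift, $\epsilon_e$ can be chosen so small that $(A,B)$ tracks the moving attractor $p_A(\varphi)$ by standard singular-perturbation tracking arguments. Once $\varphi > \varphi_{++}$, the equilibrium $p_A$ has disappeared and the frozen phase portrait has a single interior attractor $p_B(\varphi)$; the orbit is therefore driven into $U_B(\varphi)$ in time $O(1)$, during which $\varphi$ drifts by at most $O(\epsilon_e)$. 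A symmetric application of (Tr) then makes $\varphi$ decrease until it crosses $\varphi_{--}$, after which the orbit is funnelled back into $U_A$. Iterating gives the alternation of $R$ between values near $1$ and near $0$, so $R$ fluctuates above $b$ and below $a$ as required.

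The delicate point, and the one I expect to require the most care, is showing that $(R(t),\varphi(t))$ avoids $K = [a,b]\times[c,d]$ throughout the motion, not only during the slow phases but also during the fast transits across the bifurcation. During a fast transit, $R$ does traverse the slab $[a,b]$, so the argument must rely on $\varphi$ being \emph{outside} $[c,d]$ during that transit. Since $c,d$ lie strictly inside $(\varphi_{--},\varphi_{++})$ there is a fixed gap $\rho = \min(c-\varphi_{--},\varphi_{++}-d) > 0$, and during the $O(1)$ ecological transit the $\varphi$-drift is $O(\epsilon_e) \ll \rho$. Choosing $\epsilon_e$ small enough to guarantee this inequality, combined with the fact that each transit begins near $\varphi_{++}$ or $\varphi_{--}$, ensures that $\varphi \notin [c,d]$ throughout. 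This pins down the avoidance of $K$ and completes the construction. The roles of (PF), (NF1) and (NF2) are then subsumed into the robust qualitative structure of the frozen phase portrait, while (Tr) together with the smallness of $\epsilon_e,\epsilon_m$ drives the oscillatory mechanism itself.
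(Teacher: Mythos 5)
Your proposal is correct and follows essentially the same route as the paper: the orbit is trapped in the small perturbed attracting neighborhood near one axis while (Tr) forces $\varphi$ to drift slowly past the bifurcation value $\varphi_{++}$ (resp.\ $\varphi_{--}$), after which the frozen phase portrait has only the opposite attractor and the fast transit occurs with $\varphi$ changing by only $O(\epsilon_e)$, whence symmetry gives the oscillation. Your explicit verification that $K$ is avoided --- $R\notin[a,b]$ during the slow phases and $\varphi\notin[c,d]$ during the fast transits because these occur near $\varphi_{\pm\pm}$ --- is a welcome elaboration of a point the paper leaves implicit, but it is the same argument.
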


\subsection{One species with a trait with multiple phenotypes} \label{sec:multivaluedtrait}

Let us recall that in the first setting, where we considered two traits and a constant total population size, the argument used that the rate at which the ratio $R = \frac{A}{A+B}$ changes is determined only by $R$ and $\varphi$. When the assumption on the constant population size was dropped, this was no longer the case, but we saw that the main idea of the argument can still be used because in the appropriate regions in $(R,\varphi$)-coordinates the \emph{sign} of $\frac{\partial R}{\partial t}$ is still known. One can easily imagine that, under appropriate assumptions, the same approach can be used when dealing with more than two phenotypes. The good choice of $R$, a projection from many variables onto $\mathbb R$, is of course essential to the argument.

Let us present a particular family of systems with finitely many trait values. We consider $n+1$ different strains $u_i$, where the index $i \in [0,1]$ is of the form $\frac{j}{n}$. As before, we will stipulate a fast positive feedback and a slow but dominating negative feedback caused by a variable $\varphi$. We will assume that $\varphi$ is produced more by strains $u_i$ for $i$ large, and for fixed values of $u  = (u_0, \ldots , u_1)$ converges towards the average index $M$ given by
$$
M := \frac{\sum_{i = 0}^{1} i u_i}{\sum_{i = 0}^1 u_i}.
$$
In the setting with multiple traits, knowledge of $M$ and $\varphi$ will not be sufficient to determine $dM/dt$ and $d\varphi/dt$. However, we will present assumptions where the \emph{signs} of $dM/dt$ and $d\varphi/dt$ can be determined, at least in the relevant regions, which will be sufficient to guarantee large fluctuations.

Looking first only at the ecological factors of the dynamical system, ignoring the effects due to mutations, we assume that the densities of the individuals with phenotype $i$ change according to the equations
$$
\frac{du_i}{dt} = u_i \left(\mu_i \left(1 - \frac{\sum_{i = 0}^{1} u_i}{k}\right) - d\right),
$$
where the growth rates $\mu = (\mu_0, \ldots , \mu_1)$ are strictly positive, depend continuously on $u$ and $\varphi$, and for any given values of $u$ and $\varphi$ the tuple $(\mu_0, \cdots, \mu_1)$ is assumed to be either independent of $i$ or strictly monotonic in $i$. %For example, we might use
%$$
%\mu_i= 3/2 + i\left((M - \frac{1}{2}) - 2(\varphi - \frac{1}{2})\right),
%$$
%where $\mu_i$ is constant when $M = \frac{1}{2}$, and else strictly monotonic.
When $\mu$ is constant the relative population sizes do not change, and it follows that $\frac{dM}{dt} = 0$.

\begin{lemma}\label{lemma:agree}
When $\mu$ is strictly increasing (resp. decreasing) in $i$, the rate $\frac{dM}{dt}$ is strictly positive (resp. negative).
\end{lemma}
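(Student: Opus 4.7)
The plan is to compute $\frac{dM}{dt}$ directly from the quotient rule and to recognize the result as a nonnegative scalar times the covariance of the trait index $i$ with the growth rate $\mu_i$, taken under the empirical probability distribution $p_i := u_i/S$, where $S = \sum_j u_j$. A single application of Chebyshev's sum inequality then pins down the sign.

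First I would write $M = N/S$ with $N = \sum_i i\, u_i$, apply the quotient rule, and regroup to obtain
$$
\frac{dM}{dt} \;=\; \frac{1}{S}\sum_i (i-M)\,\frac{du_i}{dt}.
$$
Substituting the ecological equation $\frac{du_i}{dt} = u_i\bigl(\mu_i(1-S/k)-d\bigr)$ splits this into two pieces, and the death contribution $-d\sum_i (i-M)p_i$ vanishes identically because $\sum_i i\, p_i = M$. What remains is
$$
\frac{dM}{dt} \;=\; \Bigl(1-\tfrac{S}{k}\Bigr)\,\mathrm{Cov}_p(i,\mu_i).
$$

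Next I would apply Chebyshev's sum inequality, most transparently via the identity $2\,\mathrm{Cov}_p(i,\mu_i) = \sum_{i,j} p_i p_j\,(i-j)(\mu_i-\mu_j)$, which expresses the covariance as a sum of terms of a fixed sign whenever $\mu$ is monotone in $i$. This yields $\mathrm{Cov}_p(i,\mu_i)>0$ whenever $\mu$ is strictly increasing in $i$ and at least two phenotypes carry positive density, and $\mathrm{Cov}_p(i,\mu_i)<0$ in the decreasing case. Trajectories starting from a single phenotype are invariant and keep $M$ constant, so the degenerate monomorphic case is not a counterexample but simply falls outside the scope of the lemma.

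The only real subtlety is the sign of the prefactor $1-S/k$. Since $\dot S \leq -dS$ whenever $S\geq k$, the open region $\{S<k\}$ is forward-invariant, so for any biologically relevant initial condition we have $1-S/k>0$ throughout the evolution, and the sign of $\frac{dM}{dt}$ matches the sign of $\mathrm{Cov}_p(i,\mu_i)$. This forward-invariance is the one non-algebraic point I expect to require attention; once it is recorded, the proof collapses to a single covariance identity and uses none of the structural assumptions on $f$, $g$, or $h$ from the surrounding framework.
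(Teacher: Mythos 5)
Your proof is correct and follows essentially the same route as the paper's: differentiate the quotient, observe that the $i$-independent factors $(1-S/k)$ and $-d$ drop out, and conclude via Chebyshev's sum inequality (the paper phrases the final step as the weighted-mean inequality $\frac{\sum_i i\,\mu_i u_i}{\sum_i \mu_i u_i} > \frac{\sum_i i\, u_i}{\sum_i u_i}$ rather than as positivity of $\mathrm{Cov}_p(i,\mu_i)$, but it is the same fact, and your symmetrization identity is the standard proof of it). Your one genuine addition is the explicit handling of the sign of the prefactor $1-S/k$: the paper's proof divides this factor out without comment, which tacitly assumes the total population is below carrying capacity, so your forward-invariance observation closes a small gap that the paper leaves implicit.
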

\begin{proof}
Let us assume that $\mu$ is strictly increasing, the argument is identical when $\mu$ is decreasing. We note that
$$
\begin{aligned}
\frac{dM}{dt} & = \frac{d}{dt} \left(\frac{\sum_{i = 0}^{1} i u_i}{\sum_{i = 0}^1 u_i}\right)\\
 & = \frac{\left(\sum_{i = 0}^{1} i \frac{du_i}{dt}\right)\left(\sum_{i = 0}^1 u_i \right) - \left(\sum_{i = 0}^{1} i u_i\right)\left(\sum_{i = 0}^1  \frac{du_i}{dt} \right)}{\left(\sum_{i = 0}^1 u_i\right)^2}.
\end{aligned}
$$
Since we are only interested in the sign of $\frac{dM}{dt}$, we can drop the denominator and hence need to prove that
$$
\frac{\sum_{i = 0}^{1} i \frac{du_i}{dt}}{\sum_{i = 0}^{1} i u_i} > \frac{\sum_{i = 0}^{1} \frac{du_i}{dt}}{\sum_{i = 0}^{1} u_i}.
$$
Plugging in the formula for $\frac{du_i}{dt}$, we note that the terms $(1 - \frac{\sum_{i = 0}^{1} u_i}{k})$ and $-d$, which are both independent of $i$, drop out of the quotients and we are left with showing that
$$
\frac{\sum_{i = 0}^{1} i \mu_i \cdot u_i}{\sum_{i = 0}^{1} i u_i} > \frac{\sum_{i = 0}^{1} \mu_i \cdot u_i}{\sum_{i = 0}^{1} u_i},
$$
which is equivalent to
$$
\frac{\sum_{i = 0}^{1} i \mu_i \cdot u_i}{\sum_{i = 0}^{1} \mu_i \cdot u_i} > \frac{\sum_{i = 0}^{1} i u_i}{\sum_{i = 0}^{1} u_i}.
$$
Since $\mu$ is assumed to be increasing in $i$ this inequality holds.
\end{proof}

We will assume that the rate of change of the feedback variable $\varphi$ can be written as
$$
\frac{d\varphi}{dt} = \epsilon_e \cdot h(u, \varphi),
$$
where the constant $\epsilon_e$ will later be assumed to be sufficiently small and $h$ is continuously differentiable. We make the following ``negative feedback'' assumption:

\begin{enumerate}
\item[(NF1)] The function $h$ satisfies $h<0$ when $\varphi > M$, and $h > 0$ when $\varphi < M$.
\end{enumerate}

An example of such a function is $h(u) = M(u) - \varphi$, but it is not necessarily the case that $h$ can be expressed as a function of $M$. The NF1 assumption of course implies that for each value of $M$ there is a unique value $\varphi = \varphi(M)$ for which $h = 0$. The assumption that this unique value equals $M$ is merely a convenience, which by the Implicit Function Theorem can be obtained by a change of coordinates whenever $\frac{\partial h}{\partial \varphi}(u, \varphi) \neq 0$ is satisfied for all $u$ and $\varphi = \varphi(M)$.

We will make the following additional assumptions guaranteeing both a negative and a positive feedback.
\begin{enumerate}
\item[(NF2)] There exist $1> \varphi_+ > \varphi_- > 0$ with the following property: For any given non-zero $u$, the rates $\mu$ are decreasing in $i$ for $\varphi > \varphi_+$, and increasing in $i$ for $\varphi < \varphi_-$.
\item[(PF)] For each $\varphi$ there is a unique value $M_\varphi$ such that $\mu$ is strictly increasing at $(u, \varphi)$ whenever $M(u) > M_\varphi$, and strictly decreasing whenever $M(u) < M_\varphi$. We assume that $M_\varphi$ is non-decreasing with $\varphi$.
\end{enumerate}

In other words, when $\mu$ is non-decreasing, it must remain so when $\varphi$ is decreased or when $M(u)$ is increased, and in the latter case must become strictly increasing. Similarly, when $\mu$ is non-increasing, it must remain so when $\varphi$ is increased or when $M(u)$ is decreased, in the latter case it must again become strictly decreasing.

By continuity of $\mu$ it follows from (PF) that $\mu$ is constant when $M(u) = M_\varphi$. Note that we may redefine $\varphi_+$ as the smallest $\varphi$ for which $M_\varphi = 1$, and similarly $\varphi_-$ as the largest $\varphi$ for which $M_\varphi = 0$. Note also that we do not assume that $\mu$ is a function of $M$.

%An stronger assumption, implying PF, is $\frac{d\mu}{d\varphi} \le 0$ and $\frac{d\mu}{dM} \ge 0$. Not only is this assumption not necessary for the proof of Theorem~\ref{thm:zeven} below, we also consider it undesirable to demand that $\mu$ is a function of $M$ rather than $u$.

Let us now add mutations to the model:
$$
\frac{du_i}{dt} = u_i \left(\mu_i \left(1 - \frac{\sum_{i = 0}^{1} u_i}{k}\right) - d\right) + \epsilon_m\cdot g_i(u),
$$
We assume that each function $g_i$ is non-negative when $u_i = 0$, is decreasing in $u_i$, and is strictly increasing in each $u_j$ for $j \neq i$. In particular $g_i$ is strictly positive when $u_i = 0$ but $u \neq 0$, thus mutations from any strain to any other strain are possible. We will later discuss alternative assumptions, making it possible to restrict some of the mutations.

\begin{lemma}\label{lemma:order}
Each $u_i$ remains bounded from below by a constant of order $O(\epsilon_m)$.
\end{lemma}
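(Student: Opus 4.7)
The plan is to combine a crude control on the total population with the strict positivity of the mutation terms when any single strain is missing. First I would establish that, starting from a non-trivial initial population, the total $S := \sum_i u_i$ is trapped in a compact interval $[s_{\min}, s_{\max}]$ with $s_{\min} > 0$ independent of $\epsilon_m$ (for $\epsilon_m$ small). Summing the defining equations yields
$$
\frac{dS}{dt} \;=\; \sum_i u_i\bigl(\mu_i(1 - S/k) - d\bigr) + \epsilon_m \sum_i g_i(u),
$$
and $S = k$ provides an upper barrier since the right-hand side is negative for $S$ slightly above $k$, while for $S$ small the bracketed factor is uniformly positive (under the implicit non-extinction assumption $d < \min_i \mu_i$) and the mutation contribution is non-negative. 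Hence the trapping interval exists.

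Second, fix a small $\delta_0 > 0$ and consider the compact set
$$
\mathcal{K}_i \;:=\; \bigl\{ u : s_{\min} \le S \le s_{\max},\; 0 \le u_i \le \delta_0 \bigr\}.
$$
Because $S \ge s_{\min} > 0$, the origin $u = 0$ lies outside $\mathcal{K}_i$, so by the hypothesis that $g_i(u) > 0$ whenever $u_i = 0$ and $u \neq 0$, combined with continuity of $g_i$, the restriction of $g_i$ to $\mathcal{K}_i$ attains a strictly positive minimum $c_i > 0$. Let $K$ be a uniform upper bound for $|\mu_i(1 - S/k) - d|$ on the trapping region (which exists by compactness and continuity of $\mu$).

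Third, set $\alpha := c_i/(2K)$. For $u_i \le \alpha \epsilon_m$ with $u$ inside the trapping region,
$$
\frac{du_i}{dt} \;\ge\; -K u_i + \epsilon_m c_i \;\ge\; \epsilon_m(c_i - \alpha K) \;=\; \tfrac{1}{2}\epsilon_m c_i \;>\; 0,
$$
so the slab $\{u_i \ge \alpha \epsilon_m\}$ is forward invariant, and an orbit starting below that threshold reaches it in time at most $2\alpha/c_i$, which is independent of $\epsilon_m$. Taking the minimum of the $\alpha$'s produced for each $i$ gives the claimed uniform $O(\epsilon_m)$ lower bound.

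The main obstacle is the first step: the lower bound on $S$. Showing that the ecology does not drive the total population to zero is intuitively clear because the per-capita growth rate $\mu_i(1 - S/k) - d$ is positive for small $S$, but making this uniform in the phenotype distribution requires that $d$ be strictly smaller than the growth rates $\mu_i$ in the operating regime (an assumption which the model essentially must have for the population to persist ecologically). Once that is in place, the remaining two steps are a standard continuity compactness argument and an elementary linear ODE comparison, so no further subtlety is anticipated.
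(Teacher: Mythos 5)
Your proposal is correct and follows essentially the same route as the paper's proof: a uniform lower bound on the total population, compactness/continuity to get $g_i \ge c_i > 0$ where $u_i$ is small, and the linear comparison $\frac{du_i}{dt} \ge -K u_i + c_i\epsilon_m$ yielding a forward-invariant slab of width $O(\epsilon_m)$. You are in fact more careful than the paper on the first step, correctly flagging that the lower bound on $\sum_i u_i$ (which the paper simply asserts "follows from the formula") rests on an implicit persistence assumption such as $d < \min_i \mu_i$ in the operating regime.
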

\begin{proof}
It follows from the formula for $\frac{du_i}{dt}$ that $\|u\|:=\sum_{i = 0}^{1} u_i$ remains bounded from above and below, i.e. $\|u\| = O(1)$. It follows that as $u_i \rightarrow 0$:
$$
\epsilon_m \cdot g_i(u) \ge O(\epsilon_m).
$$
Thus for $u_i$ small, the worst case scenario is that
$$
\frac{du_i}{dt} \ge - C \cdot u_i + c\cdot \epsilon_m,
$$
for some uniform constants $c, C>0$. It follows that
$$
u_i \ge \frac{c \epsilon_m}{C} = O(\epsilon_m).
$$
\end{proof}

\begin{thm}\label{thm:zeven}
For $\epsilon_e$ and $\epsilon_m$ sufficiently small there exist orbits $(u(t), \varphi(t)$) for which $M(t)$ fluctuates arbitrarily closely between $0$ and $1$.
\end{thm}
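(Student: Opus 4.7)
The plan is to construct a specific orbit whose $M$-coordinate oscillates by exploiting the slow/fast structure of the equations, and to control the fast coordinate $M$ with the combined aid of Lemma \ref{lemma:agree} and the lower bound of Lemma \ref{lemma:order}. Fix a small tolerance $\delta>0$; the goal becomes to show that, for $\epsilon_e,\epsilon_m$ sufficiently small, some orbit repeatedly visits both $\{M<\delta\}$ and $\{M>1-\delta\}$. By assumption (NF1) the variable $\varphi$ always moves monotonically towards $M$, and by assumption (NF2) combined with (PF), the direction of monotonicity of $\mu$ in $i$ is determined on the regions $\varphi<\varphi_-$ and $\varphi>\varphi_+$. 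Lemma \ref{lemma:agree} then transfers monotonicity of $\mu$ into monotonicity of $M$. These two facts already fix a clear ``mean-field'' picture in the $(M,\varphi)$-plane: $\varphi$ is dragged slowly towards the diagonal while $M$ is pushed quickly away from it whenever we are below $\varphi_-$ (push up) or above $\varphi_+$ (push down).

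First I would pick an initial state with $\varphi(0)<\varphi_-$ and $u$ concentrated near the strain $u_0$, so $M(0)<\delta$. While $\varphi<\varphi_-$, (NF2) gives $\mu$ strictly increasing in $i$, so by Lemma \ref{lemma:agree} $dM/dt>0$ with a rate bounded below uniformly on compact subsets of $\{M\le 1-\delta\}$; hence $M$ reaches $1-\delta$ after a finite time $T_1$ that depends only on $\delta$ and on the spread of $\mu$, but \emph{not} on $\epsilon_e$ or $\epsilon_m$. Once $M$ is close to $1$, $\varphi$ (which was increasing towards $M$) will eventually cross $\varphi_+$; because $d\varphi/dt=O(\epsilon_e)$ this happens after a time that is $O(1/\epsilon_e)$, during which $M$ can only increase further (it is stuck above $\varphi_+$ where $\mu$ is still increasing at least as long as $\varphi<\varphi_+$, and once $\varphi$ has caught up the argument in the next step takes over). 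The symmetric sweep will eventually take $M$ back below $\delta$.

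The main obstacle is the downward half of the cycle. When $\varphi$ first exceeds $\varphi_+$, the strain $u_1$ dominates and the low-index strains are only of order $\epsilon_m$ by Lemma \ref{lemma:order}. Now (NF2) gives $\mu$ strictly decreasing in $i$, and Lemma \ref{lemma:agree} gives $dM/dt<0$, but in order for $M$ to actually drop down to $\delta$ the populations $u_i$ with small $i$ must regrow from their $O(\epsilon_m)$ floor to $O(1)$. Each such strain has a net per-capita rate bounded below by some positive constant $c>0$ as long as $u_1$ is near its carrying capacity (since $\mu_0-\mu_1$ is bounded below by (PF) and continuity), so $u_0$ escapes to order $1$ in time at most $T_2=O(|\log\epsilon_m|)$; after this the ecological competition crashes $u_1$ and drives $M$ below $\delta$ in a further $O(1)$ time. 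The critical quantitative condition is that $\varphi$ must not fall back below $\varphi_+$ during this transition, and that on the upward sweep $\varphi$ must not rise too far; this is guaranteed provided
\begin{equation*}
\epsilon_e\cdot\bigl(1+|\log\epsilon_m|\bigr)\ll \varphi_+-\varphi_-,
\end{equation*}
which can be ensured by choosing $\epsilon_e$ sufficiently small relative to a fixed function of $\epsilon_m$ (e.g.\ $\epsilon_e\le\epsilon_m$).

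Putting these estimates together, once the system enters the region $\{\varphi<\varphi_-,\,M<\delta\}$ it returns to a state in $\{\varphi>\varphi_+,\,M>1-\delta\}$ after a bounded controlled time, and by the symmetry of the assumptions the reverse transition happens analogously. Hence the orbit produces indefinitely many crossings between $\{M<\delta\}$ and $\{M>1-\delta\}$. Letting $\delta\to 0$ along a diagonal choice of $\epsilon_e,\epsilon_m$ yields orbits whose $M$-coordinate fluctuates arbitrarily closely between $0$ and $1$, as required. The subtle point throughout is that we never need to compute $dM/dt$ exactly; it is enough to know its sign from Lemma \ref{lemma:agree}, together with the $O(|\log\epsilon_m|)$ time scale for mutation-driven escape from the $\epsilon_m$-floor provided by Lemma \ref{lemma:order}.
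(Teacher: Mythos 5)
Your overall strategy --- a relaxation oscillation in the $(M,\varphi)$-plane driven by the sign information from Lemma \ref{lemma:agree} together with (NF1), (NF2) and (PF), plus the $O(\epsilon_m)$ floor of Lemma \ref{lemma:order} to restart the suppressed strains --- is the same as the paper's. But there is a genuine gap relative to what the theorem is meant to assert. Your quantitative accounting of the downward sweep gives a regrowth time $T_2=O(|\log\epsilon_m|)$ for $u_0$ to escape its $O(\epsilon_m)$ floor, and you then \emph{need} the coupling $\epsilon_e\bigl(1+|\log\epsilon_m|\bigr)\ll\varphi_+-\varphi_-$, i.e.\ a bound on $\epsilon_e$ that depends on $\epsilon_m$. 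The theorem, as the paper reads it, asserts the conclusion for $\epsilon_e$ and $\epsilon_m$ \emph{independently} sufficiently small; immediately after its proof the paper explicitly contrasts this with the coupled version (``the bound on $\epsilon_e$ may have to depend on the choice of $\epsilon_m$''), which it reserves for the weaker setting where not all mutations $u_i\to u_j$ are allowed. So what you have proved is the alternative, weaker statement, not Theorem \ref{thm:zeven} itself.

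The idea you are missing is the paper's key claim at the crossing time $t_2$ (the first time $M=\varphi$): by then $u_0$ is already bounded below by a constant \emph{independent of} $\epsilon_m$. The argument is that during the high-$M$ phase all strains $u_i$ with $i\neq 1$ are comparable to $\epsilon_m$ and hence comparable to each other (Lemma \ref{lemma:order} supplies the lower bound, boundedness of $\|u\|$ the upper one); once $\mu$ becomes strictly decreasing in $i$, the strain $u_0$ has the strictly largest exponential coefficient; and since $M$ has by time $t_2$ decreased by a definite amount while the total population stays bounded away from $0$, a definite fraction of the population must sit in the low-index strains, hence in $u_0$. From that macroscopic foothold the remaining transit of $M$ below $\varphi_-$ takes $O(1)$ time, during which $\varphi$ moves only by $O(\epsilon_e)$ --- no $|\log\epsilon_m|$ enters, and the smallness conditions on $\epsilon_e$ and $\epsilon_m$ decouple. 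A smaller internal inconsistency: you assert that the first upward sweep takes a time $T_1$ independent of $\epsilon_m$ while starting from $u$ concentrated near $u_0$; on that and every subsequent sweep the dominated strains sit at the $O(\epsilon_m)$ floor, so the same $|\log\epsilon_m|$ issue arises there too, and is again resolved only by your coupling assumption rather than by the uniform argument the paper uses.
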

\begin{proof}
Suppose that we start with initial values $u(t_0), \varphi(t_0)$ for which $M(t_0) := M(u(t_0)) > \varphi_+$, and for which $\mu$ is increasing in $i$. It follows that $\varphi(t_0) < \varphi_+$, hence $\varphi(t_0) < M(t_0)$ and therefore $\varphi$ is increasing at time $t_0$.

By choosing $\epsilon_m$ sufficiently small we can guarantee that $M(t)$ remains arbitrarily close to $1$ until $\varphi$ is arbitrarily close to $\varphi_+$, say $\varphi > \varphi_+ - \delta$ for $\delta > 0$ arbitrarily small. Note that $\varphi$ remains increasing while this is the situation, hence at some time $t_1>t_0$ we must have $M(t_1)= M_{\varphi(t_1)}$. We may assume that $t_1$ is the first time that this occurs, from which it follows that $M(t_1)$ must still be arbitrarily close to $1$, while $\varphi(t_1)$ must be arbitrarily close to $\varphi_+$. In particular $\varphi$ is still increasing at time $t_1$, while $M(t)$ is decreasing due to mutations. As a consequence (PF) implies that $\mu$ becomes strictly decreasing, hence $M(t)$ will continue to decrease.

Since $\varphi$ remains increasing and $M(t)$ remains decreasing, it follows that there is a smallest time $t_2 > t_1$ for which $M(t_2) = \varphi(t_2)$. We may assume that $t_2$ is the first time at which equality occurs, from which it follows that $\varphi$ has only increased between $t_1$ and $t_2$, and hence $\varphi(t_2) > \varphi_+ - \delta$.

We claim that $u_0(t_2)$ is bounded from below by a constant that is independent from $\epsilon_m$. To see this, note that by the assumption that the vector $u$ remains bounded, it follows that the $u_i$'s for $i \neq 0$ must remain arbitrarily small for $t \in (t_0, t_1)$. It follows that the corresponding growth factors $\mu_i \left(1 - \frac{\sum_{i = 0}^{1} u_i}{k}\right) - d$ must remain strictly negative, with a uniform bound from above. It follows that populations $u_i$ for $i \neq 1$ remain comparable to $\epsilon_m$, where Lemma \ref{lemma:order} implies the estimate from below, and in particular the populations $u_i(t_1)$ for $i \neq 1$ are comparable to each other, with ratios independent of $\epsilon_m$. Recall that for $t \in [t_1,t_2]$ we noted that $\mu$ is increasing, and hence the growth factor $\mu_i\left( 1 - \frac{\sum_{i=0}^1 u_i}{k}\right) - d$
is largest for $i = 0$. It follows that in the interval $[t_1, t_2]$ the population $u_0$ grows faster than any other population $u_i$, with a strictly larger exponential coefficient. At time $t_2$ the average $M(t)$ has decreased by an amount independent of $\epsilon_m$. Since the total population remains bounded away from $0$ by assumption, it follows that the size of $u_0$ must have increased by an amount independent of $\epsilon_2$, thus obtaining the claim.

By assuming that $\epsilon_m$ and $\epsilon_e$ are sufficiently small, it follows from (PF) and continuity of $\mu$ that $\mu$ will remain decreasing and $M(t)$ decreases below $\mu_-$, say at time $t_3$, and that the time interval $t_3 - t_2$ is bounded and independent of $\epsilon_m$ or $\epsilon_e$. Since $\epsilon_e$ is assumed to be small, it follows that $\varphi(t_2) \sim \varphi(t_1)$. We have ended up with assumptions on $M(t_3)$ and $\varphi(t_3)$ that are symmetrical to those on $M(t_0)$ and $\varphi(t_0)$. The symmetry of our assumptions implies that the process will repeat itself, causing arbitrarily large fluctuations in $M(u)$.
\end{proof}

If we drop the assumption that mutations from any strain to any other strain are possible, and replace it instead by the much weaker assumption that given any two strains $u_i$ and $u_j$ there is a possible sequence of mutations from $u_i$ to $u_j$, Lemma \ref{lemma:order} fails, and the above proof breaks down. We cannot guarantee that at time $t_2$ the strain $u_0$ has increased to a definite size, independent of $\epsilon_m$, and as a result we cannot give a bound on the time interval $t_3 - t_2$.

This issue can be solved by assuming that the constant $\epsilon_e$ is sufficiently small, where the bound on $\epsilon_e$ may have to depend on the choice of $\epsilon_m$. Note the difference with the above statement, which holds whenever both $\epsilon_m$ and $\epsilon_e$ are sufficiently small. In practice the stronger assumption on $\epsilon_e$, which can imply that $\epsilon_e$ is much smaller than $\epsilon_m$, may or may not be desirable.

%{\bf Still to do: add a system where intermediate $u_i$'s play a more significant role.}

\section{Examples: Rate equations, diversity index and supplementary figures}

\subsection{Shannon index}

The Shannon diversity index (H) is used in ecological research to describe species richness. Here we used the measure to describe phenotype richness within a species (we do not discuss whether bimodal phenotype distributions lead to different species). We calculate this diversity with the following formula:

$$
H = - \sum_1^n p_i \ln p_i
$$

Where $n$ is the number of phenotypes and $p_i$ the proportion of phenotype $i$ in the population.

\subsection{One species with two phenotypes and a fixed population size}

In section \ref{sec:twophenofixedpopsize} we give a proof for a system with two phenotypes and a fixed population size, and in this section we made an example of such a system. For this example we used the following equations:

$$
\begin{aligned}
\frac{d R}{d t} & =  R \left( \frac{0.011 + \frac{R}{1+R}}{k_A \left( 1 + \frac{\varphi}{k_B} \right) \left(1+ \frac{0.011+ \frac{R}{1+R}}{k_A} \right)} -0.5 \right) + \epsilon_M \frac{1 - R}{1 + R} \\
\frac{d \varphi}{d t} & = \epsilon_E \left( \frac{R}{1+R} - \varphi \right)
\end{aligned}
$$

$R$ and $\varphi$ are short for $R{t}$ and $\varphi{t}$ since they are time dependent. $R$ is the ratio of the two phenotypes, $A$ and $B$ ($r=\frac{A}{B})$, where $A$ is the phenotype interacting with the negative feedback $\varphi$. Figure \ref{fig:binarySystemR} shows the result of time simulations of the example.

\begin{figure}[!htbp]
  \begin{tabular}{cc}
  \multirow{-3}{*}[3em]{\includegraphics[width=.6 \textwidth]{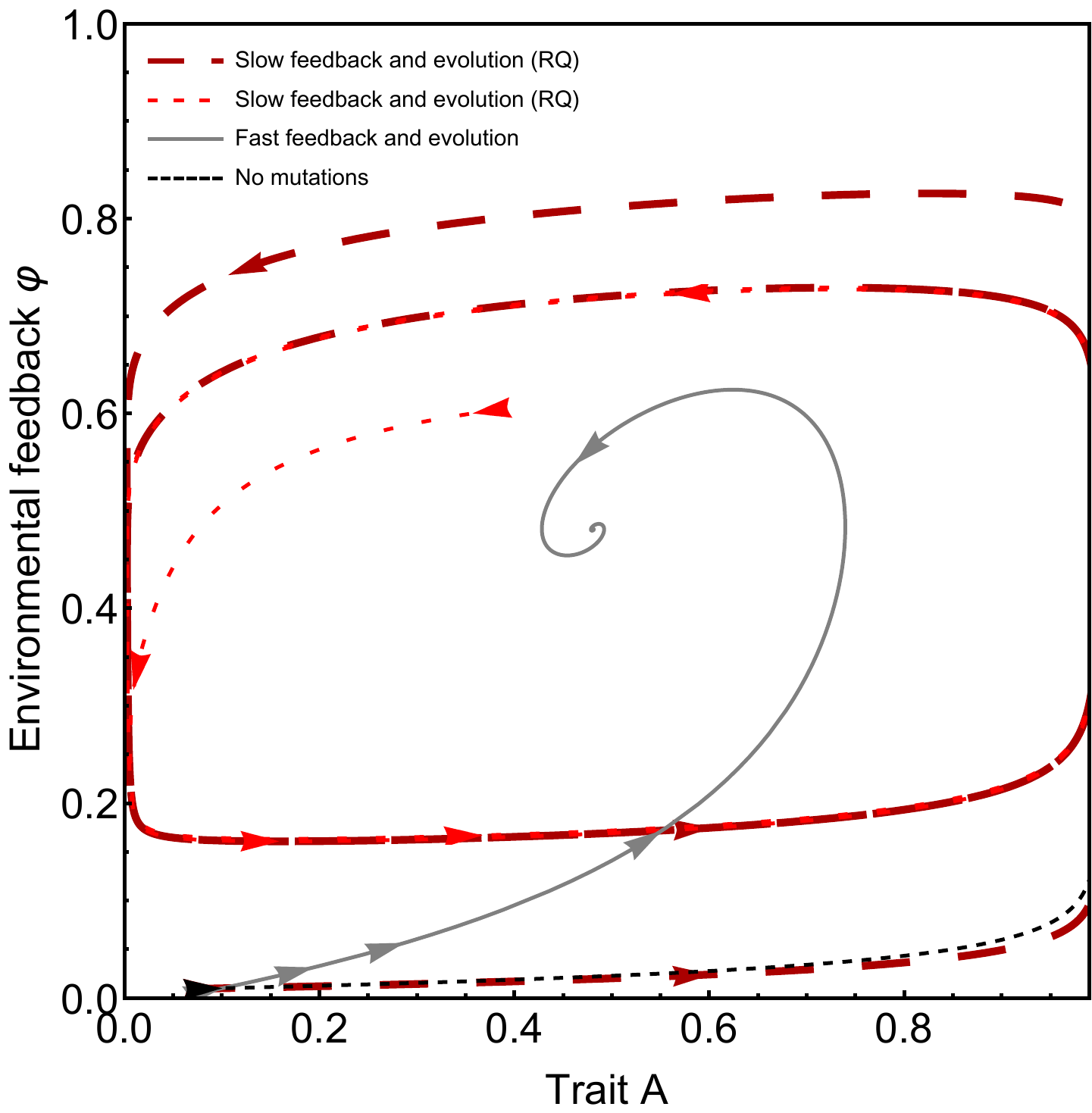}} &
  \includegraphics[width=.4 \textwidth]{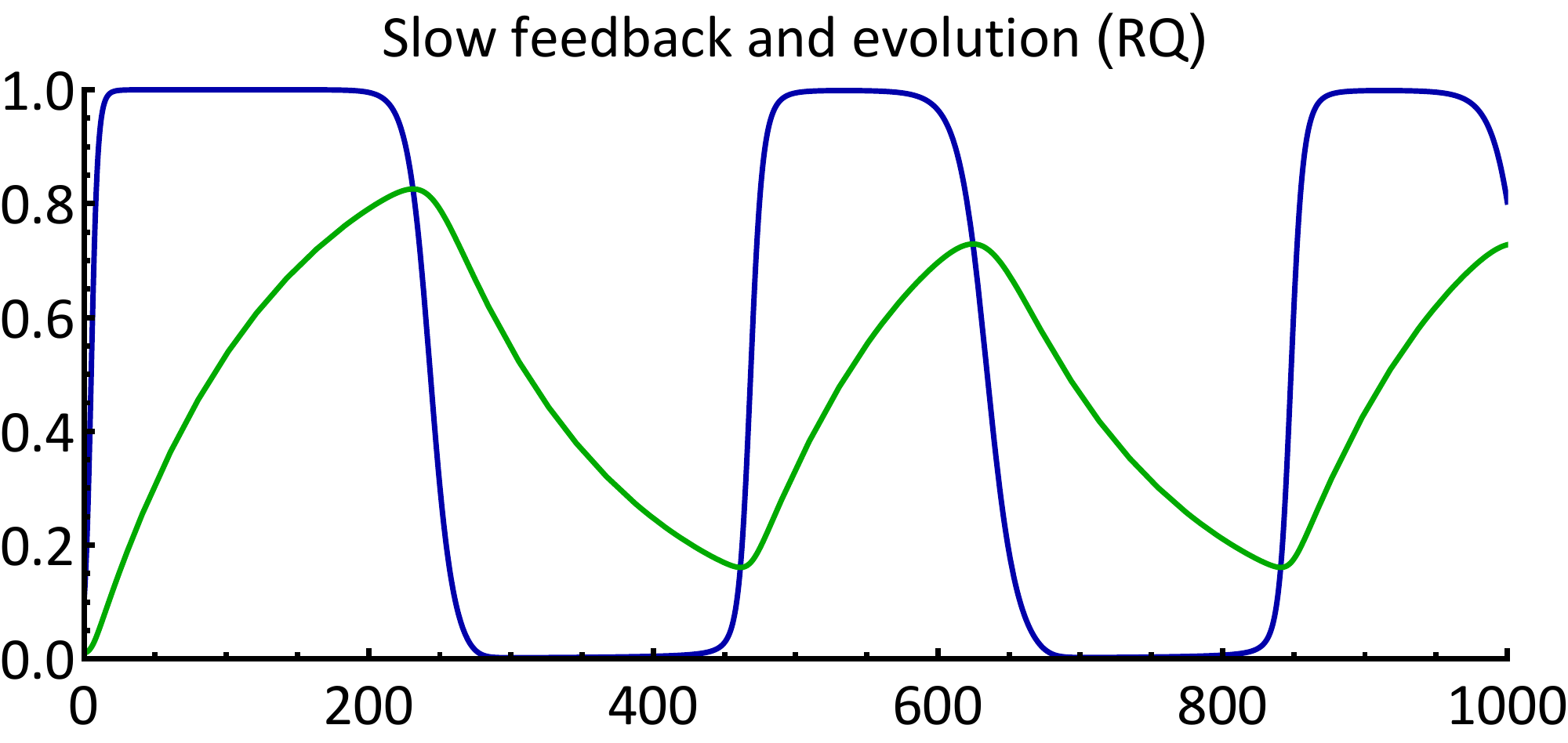} \\
  & \includegraphics[width=.4 \textwidth]{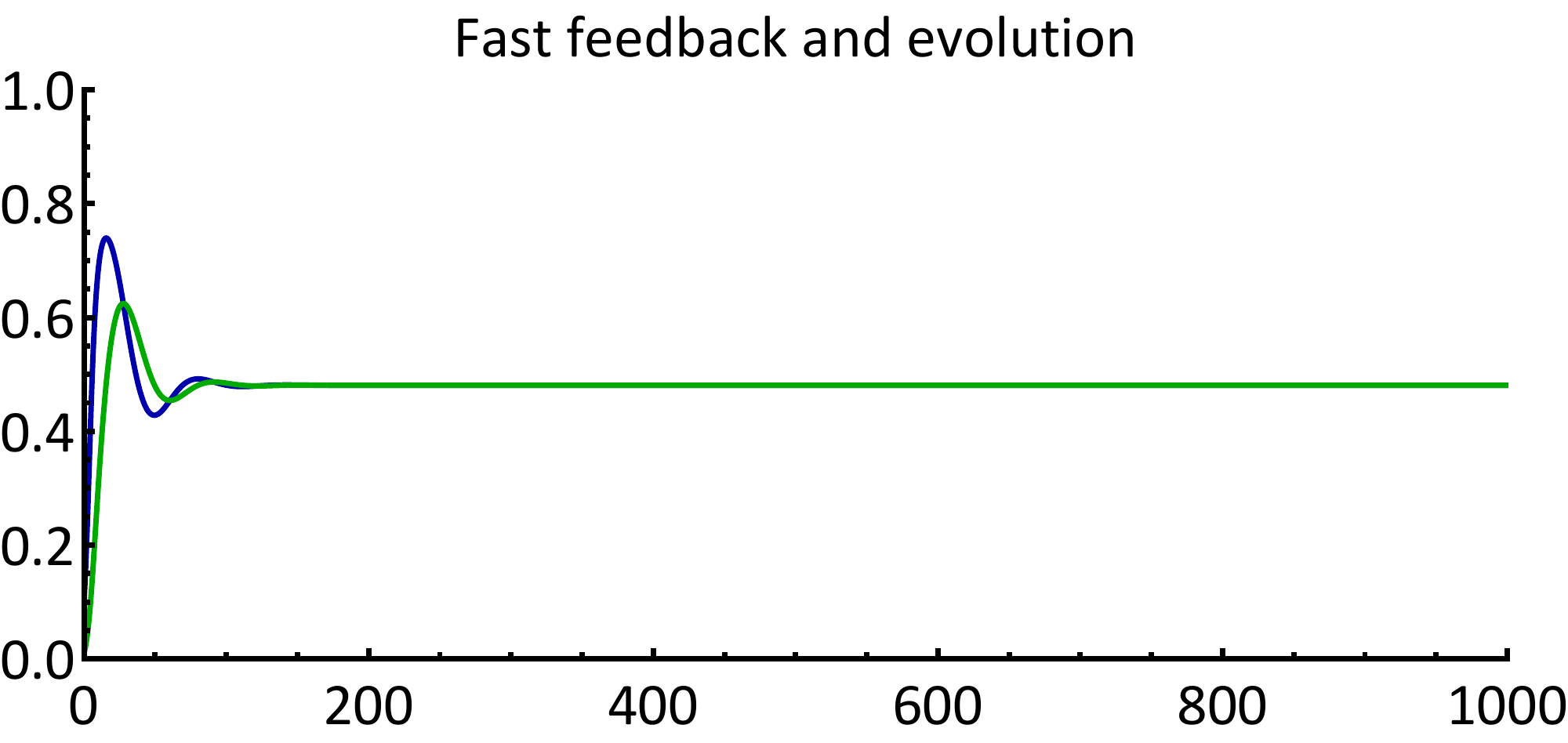} \\
  & \includegraphics[width=.4 \textwidth]{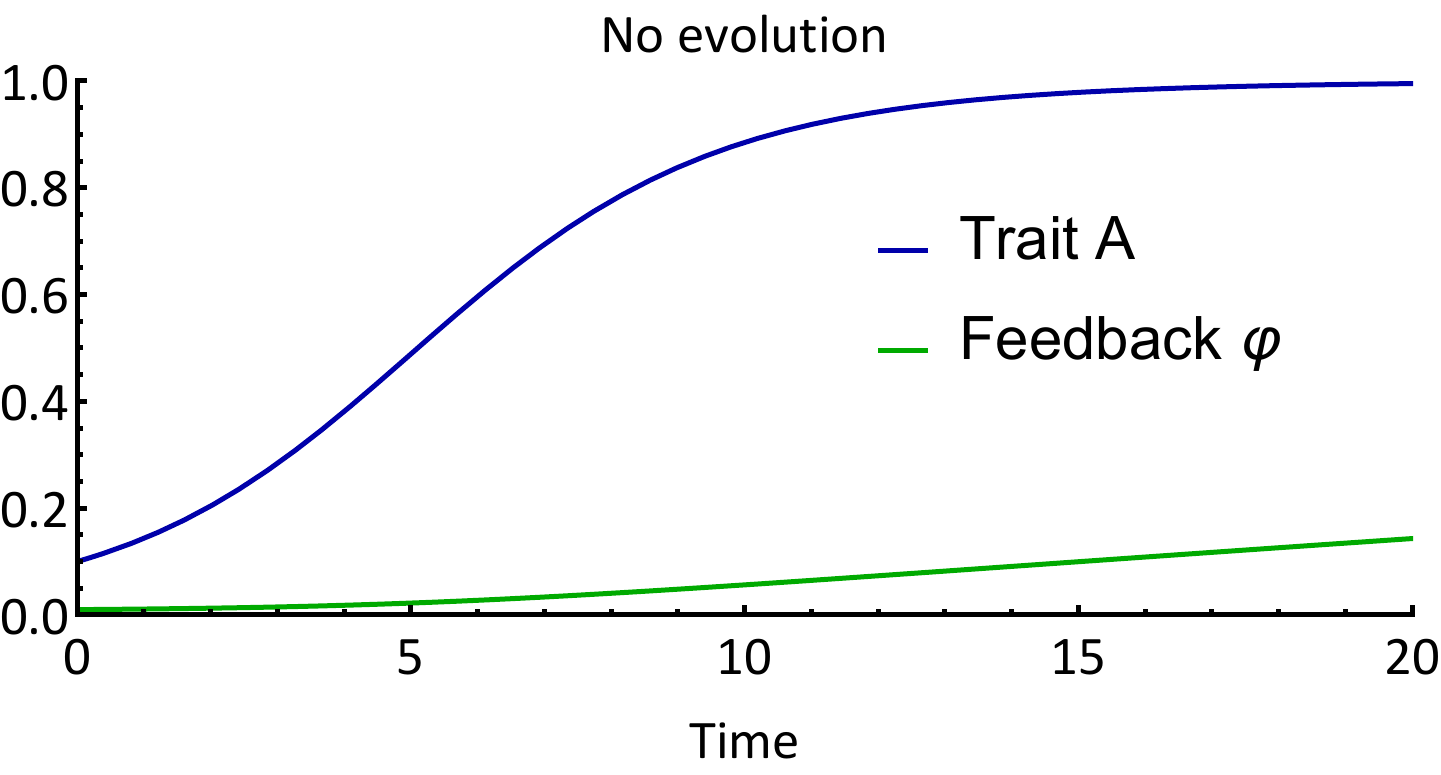}
  \end{tabular}
  \caption{\textbf{Binary trait example} Behaviour of a system with two phenotypes with a fixed population size. A phase plane diagram (left) of trait A and the environmental feedback $\varphi$ shows the possible system behaviour depending on the relative timescales of population dynamics, the feedback and mutations. The density of trait B is the opposite of trait A. Slow feedback and mutations leads to continual cyclic evolutionary dynamics (Red Queen dynamics), almost independent of initial conditions (right top). With fast feedback and mutations an equilibrium is reached (right middle). No evolution (right bottom) leads to the extinction of one of the traits (in this case trait B). We used $k_A = 0.01$ and $k_B = 0.5$ throughout the figure. For the slow feedback and evolution we used $\epsilon_E = 0.008$ and $\epsilon_M = 0.0005$, for the fast feedback and evolution we used $\epsilon_E = 0.1$ and $\epsilon_M = 0.01$ and for no evolution we used $\epsilon_E = 0.01$ and $\epsilon_M = 0$}
  \label{fig:binarySystemR}
\end{figure}

\subsection{One species with two phenotypes and a variable population size} \label{sec:example_twophenovariablepopsize}

In section \ref{sec:twophenovariablepopsize} we give a proof for a system with a variable population size. An example is shown in Figure \ref{fig:binarySystem} in the main text and the equations for the time derivatives of the populations with phenotypes $A$ and $B$ and the inhibitor $\varphi$ for that figure are:
$$
\begin{aligned}
\frac{d A}{d t} & = A \left(\frac{(0.011 + A) \left( 1 - \frac{A + B}{K} \right)}{k_A \left( 1 + \frac{\varphi}{k_B} \right) \left(1+ \frac{0.011+ A}{k_A} \right)} - d \right) + \epsilon_M (B - A) \\
\frac{d B}{d t} & = B \left( 0.5 \left( 1 - \frac{A + B}{K} \right) - d \right) + \epsilon_M (A - B) \\
\frac{d \varphi}{d t} & = \epsilon_E \left( A - \varphi \right)
\end{aligned}
$$

$A$, $B$ and $\varphi$ are all time dependent and therefore short for $A(t)$, $B(t)$ and $\varphi(t)$. $k_A$ and $k_B$ are parameters describing the growth of $A$, $d$ is a death rate and $K$ the carrying capacity. The parameters used for Figure \ref{fig:binarySystem} are $k_A = 0.01$,  $k_B = 0.5$, $d = 0.3$ and $K = 10$. For the slow feedback and evolution we used $\epsilon_E = 0.0005$ and $\epsilon_M = 0.00005$, for the fast feedback and evolution we used $\epsilon_E = 0.1$ and $\epsilon_M = 0.01$ and for no evolution we used $\epsilon_E = 0.01$ and $\epsilon_M = 0$.

\subsection{Multi-varied trait} \label{sec:multivaried}

The equations used for Figure \ref{fig:continuousSystem} in the main text are:

$$
\begin{aligned}
\frac{d u_i}{d t} & = u_i \left( \left[ \frachalf + i \left(\fracthird (M - \frachalf)) - \fractwothird (\varphi - \frachalf) \right)\right] \left( 1 - \frac{U}{K} \right) - d \right)\\
& \qquad\qquad\qquad\qquad\qquad\qquad + \epsilon_M \sum_{j \in I} \frac{1}{n} (u_j - u_i) e^{-10|i-j|} \\
\frac{d \varphi}{d t} & = \epsilon_E \left( M - \varphi \right)
\end{aligned}
$$

The collection of phenotypes $I$ consists of $n$ phenotypes with value $i$ and density $u_i$. The total population size is $U = \sum_i u_i$. The average value of the trait in the population is $M = \frac{\sum_{i \in I} i \cdot u_i}{U}$. $\varphi$ is the level of the feedback compound, $K$ the carrying capacity and $d$ the death rate.

The rate with which mutations change the phenotype decreases exponentially with the difference between the phenotypes. This is to simulate that mutations with small effect are more prevalent than mutations with large effects.

We did a time simulation with different values of the rate of evolution ($\epsilon_M$) and the delay in the slow feedback ($\epsilon_E$). For Figure \ref{fig:continuousSystem} in the main text we used the following parameters: $K = 1$ and $d=0.01$ throughout and for the the RQ dynamics $\epsilon_E = 10^{-4}$ and $\epsilon_M = 10^{-6} $, for the fast feedback and evolution $\epsilon_E = 0.1$ and $\epsilon_M = 0.01$ and when there are no mutations $\epsilon_E = 10^{-4}$ and $\epsilon_M = 0$. We assumed a population to go extinct if the population size was under 0.005.

\subsection{Two feedbacks} \label{sec:example_twofeedbacks}

We simulated a system with 2 traits associated with 2 feedbacks for Figure \ref{fig:TwoFeedbacks} in the main text. The time derivatives of the phenotypes $u_i$, where $ i = \{i_{1}, i_{A}\}$ is a vector of the two phenotypes, are:

$$
\begin{aligned}
\frac{d u_i}{d t} & = u_i \left( \frachalf + \frachalf \left[ i_{1} \left(\fracthird (M_{1} - \frachalf) - \fractwothird (\varphi_{1} - \frachalf) \right) + i_{A} \left(\fracthird (M_{A} - \frachalf) - \fractwothird (\varphi_{A} - \frachalf) \right)\right] \left( 1 - \frac{U}{K} \right) - d \right) \\
&\qquad\qquad\qquad\qquad\qquad\qquad + \epsilon_M \left( \sum_{j \in I | j_{1} = i_{1}} (u_j - u_i) e^{-10|i_{A}-j_{A}|} + \sum_{j \in I | j_{A} = i_{A}} (u_j - u_i) e^{-10|i_{1}-j_{1}|} \right) \\
\frac{d \varphi_{1}}{d t} & = \epsilon_E \left( M_{1} - \varphi_{1} \right) \\
\frac{d \varphi_{A}}{d t} & = \frac{1}{4} \epsilon_E \left( M_{A} - \varphi_{A} \right)
\end{aligned}
$$

The collection of phenotypes $I$ consists of $n$ phenotypes with value $\{i_{1}, i_{A}\}$ and density $u_i$. The total population size is $U = \sum_i u_i$. The average value of the first trait in the population is $M_{1} = \frac{\sum_{i \in I} i_{1} \cdot u_i}{U}$ and of the second trait $M_{A} = \frac{\sum_{i \in I} i_{A} \cdot u_i}{U}$. $\varphi_{1}$ is the level of the feedback compound for the first trait, $\varphi_{A}$ the level of the feedback compound for the second trait, $K$ the carrying capacity and $d$ the death rate.

Since mutations are rare we ignore mutations in both traits at the same time and the rate of mutations in one trait from one phenotype to the other decreases exponentially with increasing difference between the phenotype.

We used the parameters $d = 0.01$, $K = 1$, $\epsilon_E = 0.0001$ and $\epsilon_M = 0.0005$ for Figure \ref{fig:TwoFeedbacks} in the main text.

\subsection{Literature models} \label{sec:literature_models}

We used the competitor-competitor model from \cite{khibnik1997three} where they show RQ dynamics (Fig. 2 in \cite{khibnik1997three}). We changed the model to allow for a polymorphic population, which leads to the following equations:

$$
\begin{aligned}
\frac{d x_i}{d t} & = x_i \left( r_{1,i} - r_2 \sum_i x_i - \sum_j r_{3,i,j} y_j \right) + \epsilon_M \sum_{k \in I} \frac{1}{n} (x_k - x_i) e^{-10|i-k|} \\
\frac{d y_j}{d t} & = y_j \epsilon_E \left( r_{4,j} - r_5 \sum_j y_j - \sum_i r_{6,i,j} x_i \right) + \epsilon_M \sum_{k \in J} \frac{1}{n} (y_k - y_j) e^{-10|j-k|}
\end{aligned}
$$

The parameters are given in \cite{khibnik1997three} Eqs 6 and Fig. 2. $\epsilon_E$ is set to 1 in Main Text Fig. \ref{fig:OtherModels}B and to 0.1 in Fig. \ref{fig:OtherModels}C. The trait values ($i$ and $j$) range from 0.5 to 1.5 and we simulated 100 different phenotypes per species within this range.

%\bibliography{library}

\begin{thebibliography}{9999}
%\providecommand{\natexlab}[1]{#1}
%\providecommand{\url}[1]{\texttt{#1}}
%\expandafter\ifx\csname urlstyle\endcsname\relax
%  \providecommand{\doi}[1]{doi: #1}\else
%  \providecommand{\doi}{doi: \begingroup \urlstyle{rm}\Url}\fi

\bibitem[1]{wiser2013long}
Michael~J Wiser, Noah Ribeck, and Richard~E Lenski.
\newblock Long-term dynamics of adaptation in asexual populations.
\newblock \emph{Science}, 342\penalty0 (6164):\penalty0 1364--1367, 2013.

\bibitem[2]{brockhurst2014running}
Michael~A Brockhurst, Tracey Chapman, Kayla~C King, Judith~E Mank, Steve
  Paterson, and Gregory~DD Hurst.
\newblock Running with the red queen: the role of biotic conflicts in
  evolution.
\newblock \emph{Proceedings of the Royal Society of London B: Biological
  Sciences}, 281\penalty0 (1797):\penalty0 20141382, 2014.

\bibitem[3]{van1973new}
Leigh Van~Valen.
\newblock A new evolutionary law.
\newblock \emph{Evolutionary theory}, 1:\penalty0 1--30, 1973.

\bibitem[4]{stenseth1984coevolution}
Nils~Chr Stenseth and J~Maynard~Smith.
\newblock Coevolution in ecosystems: Red queen evolution or stasis?
\newblock \emph{Evolution}, 38\penalty0 (4):\penalty0 870--880, 1984.

\bibitem[5]{vermeij2013reining}
Geerat~J Vermeij and Peter~D Roopnarine.
\newblock Reining in the red queen: the dynamics of adaptation and extinction
  reexamined.
\newblock \emph{Paleobiology}, 39\penalty0 (4):\penalty0 560--575, 2013.

\bibitem[6]{liow2011red}
Lee~Hsiang Liow, Leigh Van~Valen, and Nils~Chr Stenseth.
\newblock Red queen: from populations to taxa and communities.
\newblock \emph{Trends in ecology \& evolution}, 26\penalty0 (7):\penalty0
  349--358, 2011.

\bibitem[7]{voje2015role}
Kjetil~L Voje, {\O}istein~H Holen, Lee~Hsiang Liow, and Nils~Chr Stenseth.
\newblock The role of biotic forces in driving macroevolution: beyond the red
  queen.
\newblock \emph{Proc. R. Soc. B}, 282\penalty0 (1808):\penalty0 20150186, 2015.

\bibitem[8]{nordbotten2016asymmetric}
Jan~Martin Nordbotten and Nils~C Stenseth.
\newblock Asymmetric ecological conditions favor red-queen type of continued
  evolution over stasis.
\newblock \emph{Proceedings of the National Academy of Sciences}, 113\penalty0
  (7):\penalty0 1847--1852, 2016.

\bibitem[9]{dieckmann1995evolutionary}
Ulf Dieckmann, Paul Marrow, and Richard Law.
\newblock Evolutionary cycling in predator-prey interactions: population
  dynamics and the red queen.
\newblock \emph{Journal of theoretical biology}, 176\penalty0 (1):\penalty0
  91--102, 1995.

\bibitem[10]{mougi2010evolution}
Akihiko Mougi and Yoh Iwasa.
\newblock Evolution towards oscillation or stability in a predator--prey
  system.
\newblock \emph{Proceedings of the Royal Society of London B: Biological
  Sciences}, page rspb20100691, 2010.

\bibitem[11]{marrow1992coevolution}
Paul Marrow, Richard Law, and C~Cannings.
\newblock The coevolution of predator—prey interactions: Esss and red queen
  dynamics.
\newblock \emph{Proc. R. Soc. Lond. B}, 250\penalty0 (1328):\penalty0 133--141,
  1992.

\bibitem[12]{marrow1993evolutionary}
Paul Marrow and Chris Cannings.
\newblock Evolutionary instability in predator-prey systems.
\newblock \emph{Journal of theoretical biology}, 160\penalty0 (2):\penalty0
  135--150, 1993.

\bibitem[13]{dieckmann1996dynamical}
Ulf Dieckmann and Richard Law.
\newblock The dynamical theory of coevolution: a derivation from stochastic
  ecological processes.
\newblock \emph{Journal of mathematical biology}, 34\penalty0 (5-6):\penalty0
  579--612, 1996.

\bibitem[14]{branco2018eco}
Pedro Branco, Martijn Egas, James~J Elser, and Jef Huisman.
\newblock Eco-evolutionary dynamics of ecological stoichiometry in plankton
  communities.
\newblock \emph{The American Naturalist}, 192\penalty0 (1):\penalty0
  E000--E000, 2018.

\bibitem[15]{abrams2000evolution}
Peter~A Abrams.
\newblock The evolution of predator-prey interactions: theory and evidence.
\newblock \emph{Annual Review of Ecology and Systematics}, 31\penalty0
  (1):\penalty0 79--105, 2000.

\bibitem[16]{law1997evolution}
Richard Law, Paul Marrow, and Ulf Dieckmann.
\newblock On evolution under asymmetric competition.
\newblock \emph{Evolutionary Ecology}, 11\penalty0 (4):\penalty0 485--501,
  1997.

\bibitem[17]{van1995predator}
Jan~D Van Der~Laan and Pauline Hogeweg.
\newblock Predator—prey coevolution: interactions across different
  timescales.
\newblock \emph{Proc. R. Soc. Lond. B}, 259\penalty0 (1354):\penalty0 35--42,
  1995.

\bibitem[18]{kisdi2002red}
E~Kisdi, FJA Jacobs, and SAH Geritz.
\newblock Red queen evolution by cycles of evolutionary branching and
  extinction.
\newblock \emph{Selection}, 2\penalty0 (1-2):\penalty0 161--176, 2002.

\bibitem[19]{carlson2011eco}
SM~Carlson, TP~Quinn, and AP~Hendry.
\newblock Eco-evolutionary dynamics in pacific salmon.
\newblock \emph{Heredity}, 106\penalty0 (3):\penalty0 438, 2011.

\bibitem[20]{bonachela2017eco}
Juan~A Bonachela, Meike~T Wortel, and Nils~Chr Stenseth.
\newblock Eco-evolutionary red queen dynamics regulate biodiversity in a
  metabolite-driven microbial system.
\newblock \emph{Scientific reports}, 7\penalty0 (1):\penalty0 17655, 2017.

\bibitem[21]{dercole2006coevolution}
F~Dercole, R~Ferriere, A~Gragnani, and S~Rinaldi.
\newblock Coevolution of slow--fast populations: evolutionary sliding,
  evolutionary pseudo-equilibria and complex red queen dynamics.
\newblock \emph{Proceedings of the Royal Society of London B: Biological
  Sciences}, 273\penalty0 (1589):\penalty0 983--990, 2006.

\bibitem[22]{khibnik1997three}
Alexander~I Khibnik and Alexey~S Kondrashov.
\newblock Three mechanisms of red queen dynamics.
\newblock \emph{Proceedings of the Royal Society of London B: Biological
  Sciences}, 264\penalty0 (1384):\penalty0 1049--1056, 1997.

\bibitem[23]{vermeij1982unsuccessful}
Geerat~J Vermeij.
\newblock Unsuccessful predation and evolution.
\newblock \emph{The American Naturalist}, 120\penalty0 (6):\penalty0 701--720,
  1982.

\bibitem[24]{yoshida2003rapid}
Takehito Yoshida, Laura~E Jones, Stephen~P Ellner, Gregor~F Fussmann, and
  Nelson~G Hairston~Jr.
\newblock Rapid evolution drives ecological dynamics in a predator--prey
  system.
\newblock \emph{Nature}, 424\penalty0 (6946):\penalty0 303, 2003.

\bibitem[25]{crombach2007chromosome}
Anton Crombach and Paulien Hogeweg.
\newblock Chromosome rearrangements and the evolution of genome structuring and
  adaptability.
\newblock \emph{Molecular biology and evolution}, 24\penalty0 (5):\penalty0
  1130--1139, 2007.

\bibitem[26]{crombach2008evolution}
Anton Crombach and Paulien Hogeweg.
\newblock Evolution of evolvability in gene regulatory networks.
\newblock \emph{PLoS computational biology}, 4\penalty0 (7):\penalty0 e1000112,
  2008.

\bibitem[27]{good2017dynamics}
Benjamin~H Good, Michael~J McDonald, Jeffrey~E Barrick, Richard~E Lenski, and
  Michael~M Desai.
\newblock The dynamics of molecular evolution over 60,000 generations.
\newblock \emph{Nature}, 551\penalty0 (7678):\penalty0 45, 2017.

\bibitem[28]{wortel2016evolutionary}
Meike~T Wortel, Evert Bosdriesz, Bas Teusink, and Frank~J Bruggeman.
\newblock Evolutionary pressures on microbial metabolic strategies in the
  chemostat.
\newblock \emph{Scientific reports}, 6:\penalty0 29503, 2016.

\bibitem[29]{doebeli2017diversity}
Michael Doebeli and Iaroslav Ispolatov.
\newblock Diversity and coevolutionary dynamics in high-dimensional phenotype
  spaces.
\newblock \emph{The American Naturalist}, 189\penalty0 (2):\penalty0 105--120,
  2017.

\end{thebibliography}

%\begin{thebibliography}{30}
%\bibitem{khibnik1997three} A. I. Khibnik and A. S. Kondrashov, {Three mechanisms of Red Queen dynamics}. \emph{Proceedings of the Royal Society of London B: Biological Sciences}, 264 (1384): 1049--1056, 1997.
%\end{thebibliography}

\end{document}